\definecolor{Black}{RGB}{0, 0, 0}
\definecolor{LightGray}{RGB}{216, 216, 216}
\definecolor{Gray}{RGB}{127, 127, 127}
\definecolor{Orange}{RGB}{237, 125, 49}
\definecolor{LightOrange}{RGB}{251, 229, 214}
\definecolor{Yellow}{RGB}{255, 192, 0}
\definecolor{Blue}{RGB}{91, 155, 213}
\definecolor{LightBlue}{RGB}{222, 235, 247}
\definecolor{Green}{RGB}{112, 173, 71}
\definecolor{LightGreen}{RGB}{226, 240, 217}
\definecolor{Navy}{RGB}{68, 114, 196}
\definecolor{LightNavy}{RGB}{218, 227, 243}
\theoremstyle{plain}
\newtheorem{theorem}{Theorem}
\newtheorem{lemma}[theorem]{Lemma}
\newtheorem{fact}[theorem]{Fact}
\newtheorem{question}[theorem]{Question}
\theoremstyle{definition}
\newtheorem{definition}[theorem]{Definition}
\theoremstyle{remark}
\newtheorem{example}[theorem]{Example}
\newcommand*{\R}{{\mathbb{R}}}
\renewcommand*{\l}{{\ell}}
\newcommand*{\cD}{{\mathcal{D}}}
\newcommand*{\cU}{{\mathcal{U}}}
\newcommand*{\cS}{{\mathcal{S}}}
\newcommand*{\T}{{\mathcal{T}}}
\newcommand*{\nondet}[1]{$(#1)$-nondeterministic}
\DeclareMathOperator{\poly}{poly}
\DeclareMathOperator{\erf}{erf}
\DeclareMathOperator{\dist}{dist}
\DeclareMathOperator{\spn}{span}
\let\dim\relax
\DeclareMathOperator{\dim}{dim}
\providecommand{\given}{}
\DeclarePairedDelimiter{\floor}\lfloor\rfloor
\DeclarePairedDelimiterX{\card}[1]{\lvert}{\rvert}{\renewcommand\given{\nonscript\:\delimsize\vert\nonscript\:\mathopen{}}#1}
\DeclarePairedDelimiterX{\abs}[1]{\lvert}{\rvert}{\renewcommand\given{\nonscript\:\delimsize\vert\nonscript\:\mathopen{}}#1}
\DeclarePairedDelimiterX{\norm}[1]{\lVert}{\rVert}{\renewcommand\given{\nonscript\:\delimsize\vert\nonscript\:\mathopen{}}#1}
\DeclarePairedDelimiterX{\tuple}[1]{\lparen}{\rparen}{\renewcommand\given{\nonscript\:\delimsize\vert\nonscript\:\mathopen{}}#1}
\DeclarePairedDelimiterX{\parens}[1]{\lparen}{\rparen}{\renewcommand\given{\nonscript\:\delimsize\vert\nonscript\:\mathopen{}}#1}
\DeclarePairedDelimiterX{\brackets}[1]{\lbrack}{\rbrack}{\renewcommand\given{\nonscript\:\delimsize\vert\nonscript\:\mathopen{}}#1}
\DeclarePairedDelimiterX{\set}[1]\{\}{\renewcommand\given{\nonscript\:\delimsize\vert\nonscript\:\mathopen{}}#1}
\let\Pr\relax
\DeclarePairedDelimiterXPP{\Pr}[1]{\mathbb{P}}[]{}{\renewcommand\given{\nonscript\:\delimsize\vert\nonscript\:\mathopen{}}#1}
\DeclarePairedDelimiterXPP{\PrX}[2]{\mathbb{P}_{#1}}[]{}{\renewcommand\given{\nonscript\:\delimsize\vert\nonscript\:\mathopen{}}#2}
\DeclarePairedDelimiterXPP{\Ex}[1]{\mathbb{E}}[]{}{\renewcommand\given{\nonscript\:\delimsize\vert\nonscript\:\mathopen{}}#1}
\DeclarePairedDelimiterXPP{\ExX}[2]{\mathbb{E}_{#1}}[]{}{\renewcommand\given{\nonscript\:\delimsize\vert\nonscript\:\mathopen{}}#2}
\DeclarePairedDelimiter{\dotprod}{\langle}{\rangle}
\title{Smoothed Analysis of Discrete Tensor Decomposition and Assemblies of Neurons}
\author{%
	Nima Anari\\%
	Computer Science\\%
	Stanford University\\%
	\texttt{anari@cs.stanford.edu} \\%
	\And%
	Constantinos Daskalakis\\%
	EECS\\%
	MIT\\%
	\texttt{costis@csail.mit.edu}\\%
	\And%
	Wolfgang Maass\\%
	Theoretical Computer Science\\%
	Graz University of Technology\\%
	\texttt{maass@igi.tugraz.at}\\%
	\And%
	Christos H. Papadimitriou\\%
	Computer Science\\%
	Columbia University\\%
	\texttt{christos@cs.columbia.edu}\\%
	\And%
	Amin Saberi\\%
	MS\&E\\%
	Stanford University\\%
	\texttt{saberi@stanford.edu}\\%
	\And%
	Santosh Vempala\\%
	Computer Science\\%
	Georgia Tech\\%
	\texttt{vempala@gatech.edu}\\%
}
\begin{document}
	\maketitle
	
	\begin{abstract}
		We analyze linear independence of rank one tensors produced by tensor powers of randomly perturbed vectors. This enables efficient decomposition of sums of high-order tensors. Our analysis builds upon \citet{BCMV14} but allows for a wider range of perturbation models, including discrete ones. We give an application to recovering assemblies of neurons.

		Assemblies are large sets of neurons representing specific memories or concepts. The size of the intersection of two assemblies has been shown in experiments to represent the extent to which these memories co-occur or these concepts are related; the phenomenon is called \emph{association of assemblies}.  This suggests that an animal's memory is a complex web of associations, and poses the problem of recovering this representation from cognitive data.  Motivated by this problem, we study the following more general question: Can we reconstruct the Venn diagram of a family of sets, given the sizes of their $\ell$-wise intersections? We show that as long as the family of sets is randomly perturbed, it is enough for the number of measurements to be polynomially larger than the number of nonempty regions of the Venn diagram to fully reconstruct the diagram.
	\end{abstract}
	
	\section{Introduction}\label{sec:intro}

Tensor decomposition is one of the key algorithmic tools for learning many latent variable models \cite{Chang96, MR05, HKZ12, AHK12}. In practice, tensor decomposition methods based on gradient descent and power method have been observed to work well \cite{KM11, GM17}. Theoretically, determining the minimum number of rank one components in the tensor decomposition is known to be NP-hard in the worst case \cite{Haastad90, HL13}, so usually tensor decomposition is analyzed in the average case. Several algorithms have been analyzed in the average case, where the input tensor is produced according to some probabilistic model, for example see \citet{BCMV14, GVX14, DLCC07} as well as sum-of-squares-based algorithms like \citet{BKS15, GM15, HSSS16, MSS16}.
	
The average case models studied in the literature generally fall into two categories. They either assume components of the tensor are fully random, i.e., generated from a known distribution (e.g., Gaussian), or they follow a smoothed analysis setting where some adversarially chosen instance is perturbed by random noise, see for example \citet{BCMV14, GVX14, MSS16}. Our work falls into the second category.

We build upon the framework used in \citet{BCMV14} which reduces decomposing sums of rank one tensors to showing robust linear independence of related rank one tensors, by using Jennrich's algorithm, also known as Chang's lemma \cite{Chang96, LRA93}. The main departing point of our work is  our smoothed analysis of linear independence, which we base on a new notion we call echelon trees, a generalization of Gaussian elimination and echelon form to high-order tensors, which might be of independent interest. We also get improved guarantees compared to \citet{BCMV14} when the tensors are of high enough order.

The main feature of our analysis is that it can handle discrete perturbations. To illustrate, suppose that vectors $X_1,\dots, X_m\in \R^n$ are drawn from some unknown distribution and our goal is to recover them by (noisily) observing $\sum_i X_i^{\otimes \l}$ for small values of $\l$. \citet{BCMV14} showed that up to constant factor blow-ups in $\l$ an efficient algorithm can do this as long as $X_i^{\otimes \l}$ are linearly independent in a robust sense. Note that the set of vector tuples $(X_1,\dots, X_m)$ for which $X_1^{\otimes \l},\dots, X_m^{\otimes \l}$ are linearly dependent can be defined by polynomial equations,  using determinants, and is therefore an algebraic variety. As long as $m\ll n^\l$, this variety will have dimension smaller than the whole space, so we expect most vector tuples to fall outside. \citet{BCMV14} showed that starting from an arbitrary set of vectors $X_1,\dots, X_m$, by adding Gaussian noise, the new tuple will lie far away from this variety. Our analysis on the other hand, handles a much wider class of perturbations. For example, if each $X_i$ is independently chosen at random from a ``large enough'' discrete set such as the vertices of an arbitrary hypercube, we show that with very high probability the resulting tensors are linearly independent, again in a robust sense.

For our main application, described in the next section, it is important to assume components of the tensor come from a discrete set.

\subsection{Assemblies of neurons and recovering sparse Venn diagrams}
	Experiments by neuroscientists over the past three decades \cite{Quiroga12} have identified neurons which are selectively activated when a real-world object\footnote{Or person, these are commonly known as ``Jennifer Aniston neurons''.} is seen (or more generally sensed).  It is now widely accepted \cite{Buzsaki10} that these neurons are part of large \emph{ cell assemblies}, stable sets of highly interconnected neurons whose firing (more or less simultaneous and in unison) is tantamount to a cognitive event such as the sensing or imagining of a person, or of a word or concept (hence the other common name ``concept cells'').

	In a recent experiment \cite{IQF15}, a neuron firing when one real-world entity is seen (say, the Eiffel tower) but not another (e.g., Barak Obama) may start firing on presentation of an image of Obama \emph{ after a visual experience associating the two} --- for example, a picture of Obama in front of the Eiffel tower.  This experiment has taught us that assemblies seem to be ``mobile'' and able to intersect in complex ways reflecting perceived varying degrees of associations between the corresponding entities.   The stronger the association between the  entities, the larger the intersection will be of the corresponding assemblies.   During one's life, presumably a complex mesh of entities and associations will be created, of some degree of permanence, reflecting the sum total of one's cognitive experiences.  

	All said, this complex mesh of memories in somebody's brain can be modeled as a Venn diagram where each set or assembly consists of neurons firing for a particular concept, and each region of the Venn diagram, a minimal set obtained from an intersection of assemblies and their complements, represents a class of neurons behaving the same way towards all concepts.
	
	Alternatively to the Venn diagram, one may record associations between assemblies in a hypergraph.	 The entities are the sets or nodes, and the edges reflect associations between the nodes. Furthermore, the hypergraph representing a person's state of knowledge can be adorned \emph{ with edge weights} reflecting the degree of affinity between a set of nodes (or equivalently, the size of the intersection of their corresponding sets).  
 
	This gives rise to several natural questions. The first question concerns reconstruction. How many experiments or observations are needed to identify the structure of cell assembly intersections, or in other words the Venn diagram? Here, we make two crucial assumptions. First, we assume that we can only measure the degree of association between a small number of entities or concepts. Second, the total number of classes of neurons (which behave similarly in response to stimuli) is bounded. In the language of sets, we assume the number of non-empty regions of the Venn diagram is upper bounded by some number $m$ and we can measure the sizes of $k$-wise intersections of any $k$ of our $n$ sets for $1 \leq k \leq \l$ for some small $\l$. We also allow for measurement errors.

	Our main result here is as follows: As long as the cell assemblies are slightly randomly perturbed, and as long as the number of measurements, $\binom{n}{1}+\binom{n}{2}+\dots+\binom{n}{\l}$, is polynomially larger than the number of nonempty regions of the Venn diagram, $m$, we can fully reconstruct the Venn diagram. The perturbation of cell assemblies, a process which likely occurs naturally in the brain, is a mild assumption that we need in order to escape idiosyncratic cases. We solve the problem of reconstructing the Venn diagram by casting it as a tensor decomposition problem where the elements of the decomposition come from high order tensors of the vertices of the hypercube. 

	We also explore a simpler graph-theoretic model of assembly association, motivated by more recent experimental findings \cite{IQF15, DIFQ16}:  Assume that all assemblies have the same size $K$, and that two assemblies are associated if their intersection is of size at least $b$, and are not associated if the intersection is less than another threshold $a<b$; the results of \citet{IQF15, DIFQ16} suggest that $a$ is $4\%$ of $K$, while $b$ is $8\%$ of $K$.  We show that an unreasonably rich and complex family of graphs can be realized by associations (roughly, any graph of degree $O(K/a)$).  
	
	\subsection{Problem formulation}
	
	Suppose that we have a Venn diagram formed by some $n$ sets $\cS_1, \dots, \cS_n$. We will assume that this Venn diagram has at most $m$ nonempty regions. For our main application, each set $\cS_i$ corresponds to neurons that respond to a particular stimuli, so we are assuming that there are at most $m$ classes of neurons. We let $\cU$ denote the set of neuron classes. We also have a weight function $w:\cU\to \R_{\geq 0}$ representing the sizes of various classes. Each set $\cS_i\subseteq \cU$ is an assembly and $w(\cS_i)=\sum_{u\in \cS_i} w(u)$ is its weight. 
	Our main question is the following:
	
	\begin{question}
	Given the sizes of $\l$-wise intersections of $\cS_1,\dots, \cS_n$ for some constant $\l$, i.e., $w(\cS_{i_1}\cap \dots\cap \cS_{i_\l})$ for all $i_1,\dots, i_\l \in [n]$, can we recover the full Venn diagram of $\cS_1, \dots, \cS_n$, i.e., the weight of all intersections formed by these sets and their complements?
	\end{question}
	
	Our main result is that as long as the set memberships of elements are slightly perturbed to avoid worst case scenarios, and as long as $n^\l$ is polynomially larger than $m=\card{\cU}$, the answer is yes and moreover there is an efficient algorithm that performs recovery. Our algorithm is also robust to inverse polynomial noise in the input.
	
	We pose the question as a tensor decomposition problem in the following way: To each element $u\in \cU$ assign a vector $\chi(u)\in \set{0,1}^n$, where $\chi(u)_i$ indicates whether $u\in \cS_i$. Then the entries of the following tensor capture all $\l$-wise intersections:
	\[ T=\sum_{u\in \cU}w(u)\underbrace{\chi(u)\otimes \dots \otimes \chi(u)}_{\l\text{ times}}. \]
	For simplicity of exposition, we assume weights are all equal to $1$, but our results easily generalize, since each weight $w(u)$ can be absorbed into $\chi(u)^{\otimes \l}$.
	
	\section{Notations and preliminaries}\label{sec:prelim}

We denote the set $\set{1,\dots,n}$ by $[n]$. For a matrix $A$, we denote the minimum and maximum singular values of $A$ by $\sigma_{\min}(A)$ and $\sigma_{\max}(A)$. We use $\dotprod{\cdot, \cdot}$ to denote the standard inner product.
		
We denote the tensor product of two vectors $\chi\in \R^n$ and $\chi'\in \R^m$ by $\chi\otimes \chi'$ which belongs to $\R^n\otimes \R^m\simeq \R^{n\times m}$. We use the notation $\chi^{\otimes \l}$ to denote \[\underbrace{\chi\otimes\dots\otimes \chi}_{\l\text{ times}}.\]

By abuse of notation we identify tensors $T\in \R^{n_1}\otimes \dots\otimes \R^{n_\l}$ with multilinear maps from $\R^{n_1}\times \dots\times \R^{n_\l}$ to $\R$. In other words we let $T(v_1,\dots,v_\l)$ denote $ \dotprod{T, v_1\otimes \dots\otimes v_\l}$.
We also use the notation $T(\cdot, v_2,\dots,v_\l)$ to denote the multilinear map from $\R^{n_1}$ to $\R$ given by:
\[ T(\cdot,v_2,\dots,v_\l)(v_1)=T(v_1,\dots,v_\l). \]
In general we can use $\cdot$ in place of any of the arguments of $T$. So for example $T(\cdot, \cdot,v_3,\dots,v_\l)$ is interpreted as living in $\R^{n_1}\otimes \R^{n_2}$. With a slight abuse of notation we let some of the inputs of $T$ be merged together by tensor operations. In other words we let
	$T(v_1\otimes v_2,v_3,\dots,v_\l)$
	be the same as $T(v_1,\dots,v_\l)$.

We use $e_1,\dots,e_n$ to denote the standard basis of $\R^n$. For a tuple of coordinates $I=(i_1,\dots,i_\l)$ we let $e_I$ denote $e_{i_1}\otimes\dots\otimes e_{i_\l}$. With this notation, the entry corresponding to coordinate $(i_1,\dots,i_\l)$ of a tensor $T$ can be written as $T(e_I)=T(e_{i_1},\dots,e_{i_\l})$.
	
	\section{Tensor decomposition}\label{sec:tensor}

Suppose that we have a finite universe $\cU$ of elements with a vector $\chi(u)\in \R^n$ assigned to each $u\in \cU$. Our goal is to recover $\chi(u)$'s by observing $\sum_{u}\chi(u)^{\otimes \l}$. A necessary condition is for $\chi(u)^{\otimes \l}$'s to be linearly independent, otherwise it is an easy exercise to show that there is another decomposition $\sum_{u}(c_u\chi(u))^{\otimes \l}$ for some positive weights $\set{c_u}_{u\in \cU}$ not all equal to $1$. The framework introduced by \citet{BCMV14} shows that linear independence is not just necessary, but up to a constant factor blow-up in $\l$, it is sufficient. A more detailed account is given in supplementary materials.

We also use another trick from this framework which allows us to replace symmetric tensors $\chi(u)^{\otimes \l}$ with asymmetric ones. If we divide the coordinates $[n]$ into $\l$ roughly-equal sized parts $I_1,\dots, I_\l$ and define $\chi(u)^{(i)}$ to be the projection of $\chi(u)$ onto the $i$-th part, then $\chi(u)^{(1)}\otimes \dots \otimes \chi(u)^{\l}$ is a subtensor of $\chi(u)^{\otimes \l}$. So linear independence of these tensors proves linear independence of $\chi(u)^{\otimes \l}$'s. The advantage of this trick is that when we introduce perturbations to $\chi(u)^{(1)}, \dots, \chi(u)^{(\l)}$, we do not have to worry about consistently perturbing the same coordinates and we can potentially use independent randomness. For simplicity of notation, from here on, we use $n$ (as opposed to $n/\l$) to denote the dimension of each $\chi(u)^{(i)}$. So now we can work with the following tensor:
\[T=\sum_{u\in \cU} \chi(u)^{(1)}\otimes\dots\otimes \chi(u)^{(\l)}.\]
Our main result is that the components of this sum are robustly linearly independent, assuming the components $\chi(u)^{(i)}$ are randomly perturbed. We remark that this implies robust linear independence of $\set{\chi(u)^{\otimes \l}}_{u\in \cU}$ as well, so we can recover them from the sum $\sum_{u\in\cU} \chi(u)^{\otimes \l}$.

 We first define our model of perturbations:

\begin{definition}
	\label{def:nondet}
	Assume that a vector $X \in \R^d$ is drawn according to some distribution $\cD$. We call $\cD$ a \nondet{\delta, p} distribution if for every coordinate $i\in [d]$ and any interval of the form $(t-\delta, t+\delta)$ we have
	\[ \Pr{X_i\in (t-\delta, t+\delta)\given X_{-i}}\leq p, \]
	where $X_{-i}$ represents the projection of $X$ onto the coordinates $[d]-\{i\}$.
\end{definition}

For a set of random vectors $\set{X_i}$, we call their joint distribution \nondet{\delta, p} iff their concatenation is \nondet{\delta, p}. In our setting, we will assume that for each $u\in \cU$, the vectors $\chi(u)^{(1)},\dots, \chi(u)^{(\l)}$ are chosen from a \nondet{\delta, p} distribution.

Two examples of \nondet{\delta, p} perturbations can be obtained as follows:

\begin{example}
	\label{def:p-perturb}
	Suppose that each $\chi(u)^{(i)}$ is chosen adversarially from $\set{0,1}^n$, but then each bit is independently flipped with some probability $q$. This distribution is \nondet{\frac{1}{2}, \max(q, 1-q)}.
\end{example}

\begin{example}
	\label{def:gauss-perturb}
	Suppose that each $\chi(u)^{(i)}$ is chosen adversarially from $\R^n$, but a standard Gaussian noise of total variance $\rho^2$ is added to each one. Then for any $\delta>0$, this distribution is \nondet{\delta, \erf(\sqrt{n}\delta/\rho)}.	
\end{example}
Gaussian perturbations are the model used in \citet{BCMV14}. Our main result is the following:
\begin{theorem}
	\label{thm:main}
	Assume that for each $u\in \cU$, the concatenation of the $n$-dimensional vectors $\{\chi(u)^{(i)}\}_{i\in[\l]}$ is drawn from a distribution $\cD$ that is \nondet{\delta, p}. Let $A$ be the matrix whose columns are given by flattened $a(u)=\chi(u)^{(1)}\otimes\dots\otimes\chi(u)^{\l}$ for various $u$. Then, assuming $\card{\cU}\leq (cn)^\l$, we have
	\[ \Pr{\sigma_{\min}(A)< (\delta/n)^\l}\leq n^{2\l} p^{(1-c)n}. \]
\end{theorem}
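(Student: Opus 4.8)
The plan is to reduce the smallest-singular-value bound to leave-one-out distances and then, for one column at a time, to certify a large distance by producing a single convenient tensor in the orthogonal complement of the remaining columns; the echelon tree is exactly the device that produces such tensors in a way compatible with the nondeterministic perturbations.

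\textbf{Reduction.} If $\norm{Az}<\varepsilon$ for a unit $z\in\R^{\cU}$, then its largest coordinate $z_{u^{\ast}}$ satisfies $\abs{z_{u^{\ast}}}\ge\card{\cU}^{-1/2}$, so $\dist\bigl(a(u^{\ast}),\spn\set{a(u')}_{u'\ne u^{\ast}}\bigr)<\varepsilon\,\card{\cU}^{1/2}$. Hence it suffices to show that for each fixed $u$,
\[ \Pr{\dist\bigl(a(u),\spn\set{a(u')}_{u'\ne u}\bigr)<\card{\cU}^{1/2}(\delta/n)^{\l}}\le n^{\l}p^{(1-c)n}, \]
and then union bound over the at most $(cn)^{\l}$ choices of $u$ (the prefactor $(cn)^{\l}\cdot n^{\l}$ is absorbed into $n^{2\l}$). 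Fix such a $u$ and condition on all $\set{\chi(u')^{(t)}}_{u'\ne u,\,t\in[\l]}$; this fixes $V:=\spn\set{a(u')}_{u'\ne u}$, and since $\card{\cU}-1<(cn)^{\l}$ we get $\dim V^{\perp}\ge n^{\l}-(cn)^{\l}=(1-c^{\l})n^{\l}$. Over the remaining randomness $\chi(u)^{(1)},\dots,\chi(u)^{(\l)}$, which by \Cref{def:nondet} still satisfies the anti-concentration bound coordinate by coordinate, we must exhibit $w\in V^{\perp}$ with $\abs{\dotprod{a(u),w}}$ large compared with $\norm{w}$.

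\textbf{Echelon tree.} I would build an \emph{echelon tree} for $W:=V^{\perp}$: a rooted tree of depth $\l$ whose edges at level $t$ carry coordinates in $[n]$ for the $t$-th tensor factor, so that the root-to-leaf paths $(i_{1},\dots,i_{\l})$ index a basis $\set{w_{i_{1},\dots,i_{\l}}}$ of $W$, each $w_{i_{1},\dots,i_{\l}}$ being in echelon form along its path: it has a $1$ in position $(i_{1},\dots,i_{\l})$, and at every prefix it vanishes on the pivot coordinates used by sibling subtrees, so that after conditioning $\dotprod{a(u),w_{i_{1},\dots,i_{\l}}}$ collapses, up to a quantity determined by already-revealed data, to the single monomial $\prod_{t}\chi(u)^{(t)}_{i_{t}}$. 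The construction mirrors reduced row echelon form (the $\l=1$ case): flatten $a(u)=\chi(u)^{(1)}\otimes b(u)$ as an $n\times n^{\l-1}$ object, take the pivot rows of an echelon form of a spanning set of $W$ as the level-$1$ pivot set $P_{1}\subseteq[n]$, eliminate, and recurse on the residual subspaces to obtain the children subtrees. Two structural points are needed: (i) by a partial-pivoting choice of pivots the basis tensors can be kept entrywise bounded by $1$, hence $\norm{w_{i_{1},\dots,i_{\l}}}\le n^{\l/2}$; and (ii) since each subtree of the root is a depth-$(\l-1)$ echelon tree with at most $n^{\l-1}$ leaves while there are $\dim W\ge(1-c^{\l})n^{\l}$ leaves in total, $\card{P_{1}}\ge(1-c^{\l})n$, and similarly every node's pivot set is large.

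\textbf{Descending the tree and concluding.} I then descend greedily: at a node reached by prefix $(i_{1},\dots,i_{t-1})$ with child-pivot set $P_{t}$, I look for $i_{t}\in P_{t}$ with $\abs{\chi(u)^{(t)}_{i_{t}}}\ge\delta$; revealing the coordinates of $\chi(u)^{(t)}$ in an order compatible with the echelon form, the conditional probability that no such $i_{t}$ exists is at most $p^{\card{P_{t}}}$. A full good path $(i_{1},\dots,i_{\l})$ gives $\abs{\prod_{t}\chi(u)^{(t)}_{i_{t}}}\ge\delta^{\l}$; combined with the echelon property (and, if necessary, one last application of \Cref{def:nondet} to keep the residual constant from exactly cancelling) this yields $\abs{\dotprod{a(u),w_{i_{1},\dots,i_{\l}}}}\ge\delta^{\l}$, whence $\dist(a(u),V)\ge\delta^{\l}/\norm{w_{i_{1},\dots,i_{\l}}}\ge\delta^{\l}/n^{\l/2}\ge\card{\cU}^{1/2}(\delta/n)^{\l}$. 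The failure event is that every root-to-leaf path contains a non-good node; the task is to show this forces, at some node, all of a pivot set of size $\ge(1-c)n$ to be $\delta$-bad for the corresponding $\chi(u)^{(t)}$, so a union bound over the at most $n^{\l}$ nodes/levels gives probability $\le n^{\l}p^{(1-c)n}$, which with the reduction yields the stated $n^{2\l}p^{(1-c)n}$.

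\textbf{Main obstacle.} The crux is making the echelon-tree recursion precise and, above all, extracting the exponent $p^{(1-c)n}$ with only a $\poly(n)^{\l}$ (not exponential-in-$n$) prefactor. For $\l=1$ this is clean — the pivots of $V^{\perp}$ form \emph{one} set of $\ge(1-c)n$ coordinates that must all be bad — but for $\l\ge2$ a blocking strategy could a priori spread its bad coordinates across levels and exploit an unbalanced tree, so one must use the dimension count $\dim W\ge(1-c^{\l})n^{\l}$ recursively to guarantee that some fixed large pivot set is entirely blocked, avoiding any union over \emph{which} coordinates are bad. Keeping $\norm{w}$ (equivalently the echelon coefficients) small enough that the final bound is $(\delta/n)^{\l}$ rather than $\delta^{\l}/n^{O(\l)}$ is the secondary technical point.
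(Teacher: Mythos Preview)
Your overall architecture matches the paper exactly: reduce $\sigma_{\min}$ to leave-one-out distances, fix $V=\spn\set{a(u')}_{u'\ne u}$, work in $W=V^\perp$, and find $T\in W$ with $\norm{T}_\infty\le 1$ and $\abs{T(\chi^{(1)},\dots,\chi^{(\l)})}\ge\delta^\l$ via an echelon tree. You also correctly diagnose the main obstacle.

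The genuine gap is your echelon-tree construction. Your counting argument (ii) --- ``each subtree has at most $n^{\l-1}$ leaves while there are $\ge(1-c^\l)n^\l$ leaves total, so $\card{P_1}\ge(1-c^\l)n$'' --- is valid \emph{only at the root}. It says nothing about the branching at deeper nodes, and the naive recursion you sketch (Gaussian-eliminate along the first factor, then recurse on residual subspaces) gives no lower bound on the dimension of the residual at each child; in the worst case one child absorbs almost all of $\dim W$ and the others are tiny. Your ``similarly every node's pivot set is large'' is therefore unjustified, and without \emph{uniform} branching $\ge(1-c)n$ at every node the descent argument cannot yield the exponent $p^{(1-c)n}$. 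You are aware of this, but your obstacle section stops at restating the difficulty rather than resolving it.

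The paper's fix is a specific inductive construction (its \cref{thm:echelon} and the stronger version following it): to build a tree with fractional branching $(\alpha_1,\dots,\alpha_\l)$ whenever $\prod_i(1-\alpha_i)\ge 1-\dim W/n^\l$, one \emph{merges the first two levels} (view $W\subseteq\R^{n_1n_2\times n_3\times\cdots\times n_\l}$), applies the inductive hypothesis to get a height-$(\l-1)$ tree with branching $(\beta,\alpha_3,\dots,\alpha_\l)$, then uses pigeonhole on the $\ge\beta n_1n_2$ first-level nodes to find $\ge\beta n_2$ of them sharing a common first index $i_1$; these become one height-$\l$ subtree. One then restricts to $\set{T\in W: T(e_{i_1},\cdot,\dots,\cdot)=0}$, which loses at most $n_2\cdots n_\l$ in dimension, and repeats. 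A bookkeeping inequality shows this can be done $\alpha_1 n_1$ times with $\beta=\alpha_2$. Setting all $\alpha_i=1-c$ gives the uniformly branched tree. This pigeonhole-then-restrict step is the idea your proposal is missing.

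A secondary point: the paper processes levels \emph{bottom-up}, fixing $\chi^{(\l)}$ first. For each level-$(\l-1)$ node $J$ with children $I^{(1)},\dots,I^{(s)}$, the vectors $T_{I^{(j)}}(e_J,\cdot)\in\R^n$ form a height-$1$ echelon system in the last coordinate, so the $\l=1$ analysis applies verbatim and gives $\Pr{\forall j:\;\abs{T_{I^{(j)}}(e_J,\chi^{(\l)})}<\delta x}\le p^{(1-c)n}$ when the tree is $x$-large. Choosing the best child for each $J$ yields a height-$(\l-1)$ tree that is $\delta x$-large. Your top-down description (``look for $i_t$ with $\abs{\chi^{(t)}_{i_t}}\ge\delta$'') is not literally correct --- the relevant quantity at each step is the contracted pivot $T_I(e_J,\chi^{(\l)})$, not a bare coordinate of $\chi$ --- though your parenthetical about residuals suggests you sense this. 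The bottom-up order makes the reduction to the $\l=1$ case clean and avoids the hedging you needed.
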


This theorem shows how the \nondet{\delta, p} property ensures robust linear independence. To prove it, we use a strategy similar to \citet{BCMV14}, by proving a bound on the leave-one-out distance. The leave-one-out distance is closely related to $\sigma_{\min}(A)$, and only differs from it by a factor of at most $\sqrt{\card{\cU}}\leq n^{\l/2}$ \cite{BCMV14}. It is enough to prove that for any fixed $u$
\[ \dist\left(a(u), \spn\set{a(u')}_{u'\in \cU-\set{u}}\right)\geq (\delta/\sqrt{n})^\l \]
with probability at least $1-n^\l p^{(1-c)n}$. Here $\dist$ measures the distance of a vector to the closet point in a linear subspace. A union bound implies the leave-one-out distance for all $u$ is large. As in \citet{BCMV14}, we simplify the analysis by treating $\spn\set{a(u')}_{u'\in \cU-\set{u}}$ as a generic linear subspace $V\subseteq (\R^n)^{\otimes \l}$, and only using the fact that $\dim(V)<(cn)^\l$. Noting that $n^\l \geq 1+n+n^2+\dots+n^{\l-1}$, it is enough to prove the following
\begin{lemma}
	\label{lem:main}
	Assume that vectors $\chi^{(1)},\dots,\chi^{(\l)}$ are drawn according to a \nondet{\delta, p} distribution.	 Further assume that $V\subseteq (\R^{n})^{\otimes \l}$ is a subspace of dimension at most $(cn)^\l$. Then
	\[ \Pr*{\dist\left(\chi^{(1)}\otimes\dots\otimes \chi^{(\l)}, V\right)< (\delta/\sqrt{n})^\l}\leq (1+n+n^2+\dots+n^{\l-1}) p^{(1-c)n}. \]
\end{lemma}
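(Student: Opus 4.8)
The plan is to induct on $\l$. The base case $\l=1$ is the part of the smoothed‑analysis argument that is essentially combinatorial; the inductive step peels off the first tensor factor and reduces to a subspace living on $\l-1$ factors, using what the introduction calls an ``echelon tree''. For the base case, $\dim V=:d\le cn$, so $\dim V^\perp\ge(1-c)n$. The tool is that every $r$-dimensional subspace $U\subseteq\R^N$ has a basis $w_1,\dots,w_r$ and a coordinate set $S=\{s_1,\dots,s_r\}$ with $(w_k)_{s_{k'}}=\delta_{kk'}$ and $\abs{(w_k)_j}\le 1$ for all $j$: take $S$ to index a maximum‑volume $r\times r$ submatrix of any basis matrix of $U$. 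Applied to $U=V^\perp$ this yields $w_1,\dots,w_{n-d}\in V^\perp$ with $\norm{w_k}\le\sqrt n$, so $\dist(\chi^{(1)},V)\ge\abs{\dotprod{w_k,\chi^{(1)}}}/\sqrt n$ for every $k$. Writing $\dotprod{w_k,\chi^{(1)}}=\chi^{(1)}_{s_k}-g_k$ with $g_k=-\sum_{j\notin S}(w_k)_j\chi^{(1)}_j$ a function of the coordinates of $\chi^{(1)}$ outside $S$ only, the event $\dist(\chi^{(1)},V)<\delta/\sqrt n$ forces $\abs{\chi^{(1)}_{s_k}-g_k}<\delta$ for all $k$; revealing the coordinates outside $S$ and then $\chi^{(1)}_{s_1},\chi^{(1)}_{s_2},\dots$ in turn (so that each $g_k$ is already determined when $\chi^{(1)}_{s_k}$ is exposed), the \nondet{\delta,p} property and the chain rule bound its probability by $p^{\abs S}=p^{\,n-d}\le p^{(1-c)n}$, which is the claimed bound for $\l=1$.

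For the inductive step $\l\ge 2$, write $\chi=\chi^{(1)}$, $\psi=\chi^{(2)}\otimes\dots\otimes\chi^{(\l)}$, $W=(\R^n)^{\otimes(\l-1)}$, and for $i\in[n]$ set
\[ V^{(i)}=\{\,w\in W: e_i\otimes w\in V+\spn\{e_{i'}:i'>i\}\otimes W\,\}, \]
the $i$-th slice of $V$ modulo the higher slices. Telescoping the dimensions along $V\subseteq V+\spn\{e_n\}\otimes W\subseteq\dots\subseteq\R^n\otimes W$ gives $\sum_i\dim V^{(i)}=\dim V\le(cn)^\l$, so at least $(1-c)n$ indices are \emph{good}, i.e.\ satisfy $\dim V^{(i)}\le(cn)^{\l-1}$; call this set $G$. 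Iterating the construction in the remaining factors is the echelon tree; along the way one records, for each good $i$, that $(V^{(i)})^\perp$ contains vectors pivoted at the appropriate coordinates, obtained as slices of vectors of $V^\perp$. The geometric heart of the step is the inequality
\[ \dist(\chi\otimes\psi,V)\ \ge\ \frac{1}{\sqrt n}\,\abs{\chi_i-\mu_i}\cdot\dist(\psi,V^{(i)})\qquad(i\in G), \]
where $\mu_i$ is a scalar depending only on $\psi$ and on $\chi_1,\dots,\chi_{i-1}$: it is the intercept of the affine map $\chi_i\mapsto(\text{a pairing against }\chi\otimes\psi)$ furnished by the echelon structure, whose slope is a pairing of $\psi$ against a vector of $(V^{(i)})^\perp$, and whose ``pivot at $i$'' is exactly what makes the intercept blind to $\chi_i,\chi_{i+1},\dots$.

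Granting the geometric inequality, the step closes just like the base case one level up. Since $\chi^{(2)},\dots,\chi^{(\l)}$ is again \nondet{\delta,p}, applying the induction hypothesis to each of the at most $n$ subspaces $V^{(i)}$, $i\in G$ (each of dimension $\le(cn)^{\l-1}$), shows that with probability at least $1-n\bigl(\sum_{j=0}^{\l-2}n^j\bigr)p^{(1-c)n}=1-\bigl(\sum_{j=1}^{\l-1}n^j\bigr)p^{(1-c)n}$ we have $\dist(\psi,V^{(i)})\ge(\delta/\sqrt n)^{\l-1}$ for all $i\in G$. On that event the inequality shows that $\dist(\chi\otimes\psi,V)<(\delta/\sqrt n)^\l$ forces $\abs{\chi_i-\mu_i}<\delta$ for every $i\in G$; revealing $\chi^{(2)},\dots,\chi^{(\l)}$ first and then the coordinates of $\chi$ in increasing order (so each $\mu_i$ is determined before $\chi_i$ is exposed), the \nondet{\delta,p} property and the chain rule bound the conditional probability of this by $p^{\abs G}\le p^{(1-c)n}$. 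Summing the two contributions gives $\bigl(\sum_{j=1}^{\l-1}n^j+1\bigr)p^{(1-c)n}=\bigl(\sum_{j=0}^{\l-1}n^j\bigr)p^{(1-c)n}$, which is the lemma.

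The main obstacle is the geometric inequality, which is where echelon trees must earn their keep: the slice‑subspace $V^{(i)}$ has to be at once (a) small enough for the dimension count and the induction hypothesis, (b) reached by an affine function of $\chi_i$ whose intercept only sees $\chi_1,\dots,\chi_{i-1}$, and (c) attainable losing only one factor of $\sqrt n$. A naive estimate through the norm of a single echelon vector loses polynomially more than $\sqrt n$ per level, so one needs a more global argument over the tree — in the spirit of Gaussian elimination, but tracking how the pivot structure in successive tensor factors interacts with the rank‑one tensor $\chi^{(1)}\otimes\dots\otimes\chi^{(\l)}$. Everything else (the maximum‑volume lemma, the dimension telescoping, and the two chain‑rule estimates) is routine.
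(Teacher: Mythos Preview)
Your base case and the chain-rule bookkeeping are essentially the paper's, but the proposal has a genuine gap exactly where you flag it: the geometric inequality
\[
\dist(\chi\otimes\psi,V)\ \ge\ \frac{1}{\sqrt n}\,\abs{\chi_i-\mu_i}\cdot\dist(\psi,V^{(i)})
\]
is asserted, not proved, and there is no reason to expect only a $\sqrt n$ loss. The distance $\dist(\psi,V^{(i)})$ is witnessed by some $z\in(V^{(i)})^\perp$; to lift $z$ to a $T\in V^\perp$ with $T(e_i,\cdot)=z$ and $T(e_j,\cdot)=0$ for $j>i$ you must fill in the slices $T(e_j,\cdot)$ for $j<i$. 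Those slices are what produce the intercept $\mu_i$, but nothing ties their norms to $\norm{z}$, so the denominator $\norm{T}$ can blow up. You acknowledge this (``a naive estimate \dots loses polynomially more than $\sqrt n$ per level''), but then the inductive step does not close.

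The paper's resolution is to abandon the attempt to relate $\dist(\cdot,V)$ to $\dist(\cdot,V^{(i)})$ and instead carry \emph{concrete tensors} through the induction. It first constructs the full echelon tree for $W=V^\perp$ with the global normalization $\norm{T_I}_\infty=1$ and $\abs{T_I(e_I)}=1$ for every leaf $I$ (\cref{thm:echelon}); the bound $\norm{T_I}\le n^{\l/2}$ is therefore fixed once, not accumulated level by level. The induction then peels off $\chi^{(\l)},\chi^{(\l-1)},\dots$: at each step, for a level-$(k{-}1)$ node $J$ one evaluates the candidates $T_I(e_J,\cdot,\chi^{(k+1)},\dots,\chi^{(\l)})$ over the children $I$ of $J$ and keeps the one with largest pivot value; by exactly your $\l=1$ argument, an $x$-large tree becomes $\delta x$-large except with probability $p^{(1-c)n}$ per node $J$. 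Union-bounding over the at most $1+n+\dots+n^{\l-1}$ internal nodes gives the stated probability, and at the end some leaf $I$ has $\abs{T_I(\chi^{(1)},\dots,\chi^{(\l)})}\ge\delta^\l$, hence $\dist\ge\delta^\l/n^{\l/2}$. The existence of the echelon tree with the $\l_\infty$ normalization is proved separately, by flattening the first two tensor dimensions, applying the height-$(\l{-}1)$ case, and using pigeonhole to extract a common first index; this is the ``more global argument over the tree'' you allude to but do not supply.
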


In the rest of this section we prove \cref{lem:main}.

Let $W=V^\perp \subseteq (\R^n)^{\otimes \l}$ be the linear subspace of all tensors that vanish on $V$, or in other words have zero dot product with every member of $V$. Then $\dim(W)\geq (1-c^\l)n^\l$. We will show that with high probability there is an element $T\in W$ such that $\norm{T}\leq n^{\l/2}$ and
\[ \dotprod{T,\chi^{(1)}\otimes\dots\otimes\chi^{(\l)}}=T(\chi^{(1)},\dots,\chi^{(\l)})\geq \delta^\l. \]
This implies that
\[ \dist\left(\chi^{(1)}\otimes \dots\otimes \chi^{(\l)}, V\right)\geq \frac{\delta^\l}{\norm{T}}=n^{-\l/2}\delta^\l=(\delta/\sqrt{n})^\l, \]
and the proof would be complete.

We find it instructive to first prove this fact for $\l=1$ and then for general $\l$. 

\subsection{The case \texorpdfstring{$\l=1$}{l=1}}
	\begin{proof}[Proof of \cref{lem:main} for $\l=1$]
	We will generate a sequence $T_1,\dots,T_{\dim(W)}\in W$, such that $\norm{T_i}_\infty\leq 1$ for all $i$. This ensures that $\norm{T_i}\leq \sqrt{n}$. We will then show that
	\begin{equation}
		\label{eq:highprobl1}
		\Pr{\exists i:\abs{T_i(\chi)}\geq \delta}\geq 1-p^{(1-c)n}.
	\end{equation}
	We will first pick $T_1$ to be any nonzero element of $W$. By rescaling, we can assume that $\norm{T_1}_\infty=1$ and that $T_1(e_j)=1$ for some $j$. Let us call $j$ the pivot point of $T_1$. By rearranging the coordinates we can assume without loss of generality that $j=1$. In other words $T_1(e_1)=1$ and $\norm{T_1}_\infty=1$.
	
	In order to pick $T_2$, consider the subspace $\set{T\in W\given T(e_1)=0}$. This subspace has dimension at least $\dim(W)-1$, and we can pick $T_2$ to be any nonzero element of it. As before, we can without loss of generality and by scaling assume that $T_2(e_2)=1$ and $\norm{T_2}_\infty=1$.

	When picking $T_i$, we pick any nonzero element of $\set{T\in W\given T(e_j)=0\;\forall j<i}$ and by rescaling and rearranging the coordinates assume that $T_i(e_i)=1$ and $\norm{T_i}_\infty=1$. Thus we make sure that the pivot point of $T_i$ is $i$. A keen observer would notice that $T_1,\dots,T_{\dim(W)}$ can also be obtained by a modified Gaussian elimination procedure run on some basis of the space $W$.
	
	Now that we have fixed $T_1,\dots,T_{\dim(W)}$ it remains to prove \cref{eq:highprobl1}.
	
	To do this, let us fix the coordinates of the random vector $\chi=\chi^{(1)}$ one-by-one, starting from $\chi_n$ and going backwards to $\chi_1$. Once we have fixed $\chi_{\dim(W)+1},\dots,\chi_n$ we can argue about the probability of the event $\abs{T_{\dim(W)}(\chi)}<\delta$. Since $T_{\dim(W)}(e_i)=0$ for $i<\dim(W)$, we have
	\[ T_{\dim(W)}(\chi)=\chi_{\dim(W)}+T_{\dim(W)}(e_{\dim(W)+1})\chi_{\dim(W)+1}+\dots+T_{\dim(W)}(e_n)\chi_{n}.\]
	But $t:=T_{\dim(W)}(e_{\dim(W)+1})\chi_{\dim(W)+1}+\dots+T_{\dim(W)}(e_n)\chi_n$ is a constant once we have fixed $\chi_{\dim(W)+1},\dots,\chi_{n}$. So $\abs{T_{\dim(W)}(\chi)}<\delta$ if and only if $\chi_{\dim(W)}\in (-t-\delta, -t+\delta)$. Because $\chi$ is distributed according to a \nondet{\delta, p} distribution, this event happens with probability at most $p$. In other words
	\[ \Pr{\abs{T_{\dim(W)}(\chi)}<\delta}\leq p. \]
	If this event does not occur, we are already done. Otherwise we can condition on $\chi_{\dim(W)},\dots,\chi_{n}$, and look at the event $\abs{T_{\dim(W)-1}(\chi)}<\delta$. Once we condition on $\chi_{\dim(W)}$, this event becomes independent of the previous event and we can again upperbound its probability by $p$. So we have
		\[ \Pr{\abs{T_{\dim(W)-1}(\chi)}<\delta \given  \abs{T_{\dim(W)}(\chi)}<\delta}\leq p\]
		which implies
		\[ \Pr{\abs{T_{\dim(W)-1}(\chi)}<\delta \wedge  \abs{T_{\dim(W)}(\chi)}<\delta}\leq p^2. \]
	By continuing this, in the end we get
	\[ \Pr{\wedge_{i=1}^{\dim(W)} \abs{T_{i}(\chi)}<\delta}\leq p^{\dim(W)}\leq p^{(1-c)n}, \]
	which is the complement of \cref{eq:highprobl1}.
	\end{proof}
	
\subsection{The general case}

Here we describe a structure that we name \emph{echelon tree}. This definition is motivated by the Gaussian elimination procedure for matrices that produces an \emph{echelon form}. Our definition can be seen as a generalization of this form for tensor spaces.	

We first describe an \emph{index tree} for $\R^{n_1\times\dots\times n_\l}$: Consider an abstract rooted tree $\T$ of height $\l$ where the nodes at level $k$ are labeled by \emph{different} partial indices from $[n_1]\times[n_2]\times \dots\times [n_k]$; the root has the empty label and resides at level $0$, and all leaves reside at level $\l$. We require the indices to be consistent with the tree structure, i.e., all children (and by extension descendants) of a node labeled $I=(i_1,\dots,i_k)$ must contain $I$ as the prefix of their label. We further assume that $\T$ is ordered, i.e., each node of $\T$ has an ordering over its children. This enables us to talk about post-order traversal of the tree, a linear ordering of the nodes of the tree, which we denote by the binary relation $\prec$. For two nodes labeled $I$ and $J$, we let $I\prec J$ exactly when (i) $I$ is a descendant of $J$ or
	(ii) there are ancestors $I', J'$ of $I, J$ with a common parent who places $I'$ before $J'$ (according to the ordering induced by the parent on its children).

\begin{definition}
	\label{def:indextree}
	An index tree for $\R^{n_1\times \dots\times n_\l}$ is a height $\l$ tree $\T$ of partial indices together with a post-traversal ordering $\prec$ on its nodes as described above.
\end{definition}

We emphasize that nodes of an index tree have different labels, so we consider the partial indices the same as the nodes. For example, an index tree of height $1$ is identical to an ordered list $i^{(1)},\dots, i^{(s)}$ of elements in $[n_1]$, with no repetitions allowed. Next we define an echelon tree.

\begin{definition}
	\label{def:echelontree}
	An echelon tree is an index tree where each leaf $I$ is additionally labeled by an element $T_I\in \R^{n_1\times\dots\times n_\l}$. We require that $T_I(e_I)\neq 0$ and that for every node $J$ that appears before $I$ in the post-order traversal, i.e., $J\prec I$, the following identity to hold:
	\[ T_I(e_J, \cdot, \dots, \cdot)=0. \]
\end{definition}

Note that the identity in the above definition is requiring an entire sub-array of $T_I$ to be zero. For example a height 1 echelon tree is a list of unique indices $i^{(1)},\dots,i^{(s)}$ of $[n_1]$ together with vectors $T^{(1)},\dots,T^{(s)}\in \R^{n_1}$ such that $T^{(j)}$ has zeros in the $i^{(1)}, \dots, i^{(j-1)}$ entries and has a nonzero $i^{(j)}$-th entry. Notice the similarity to the echelon form obtained by Gaussian elimination in a matrix. In particular, for a height 1 echelon tree, the vectors $T^{(1)},\dots, T^{(s)}$ must be linearly independent.

We say that $\T$ is an echelon tree for the linear subspace $W\subseteq \R^{n_1\times\dots\times n_\l}$ if for all leaves $I$, we have $T_I\in W$. Notice that we can collapse or flatten consecutive levels of an echelon tree, and the result would remain an echelon tree. In this operation, nodes of a particular level $i$ are removed, and each orphaned node of level $i+1$ is assigned to its grandparent (of level $i-1$). We then treat the indices as coming from $[n_1]\times \dots \times [n_in_{i+1}]\times\dots\times[n_\l]$, i.e., we merge the $i, i+1$-st level indices. This also corresponds to partially flattening tensors $T_I$ and considering them as elements of $\R^{n_1\times \dots \times n_in_{i+1}\times \dots n_\l}$. It is easy to check that these operations preserve the properties in \cref{def:echelontree}:

\begin{fact}
	\label{fact:collapse}
	Collapsing an echelon tree at level $i$ produces an echelon tree.	
\end{fact}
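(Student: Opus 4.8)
The plan is to verify directly that collapsing preserves each of the three defining conditions of an echelon tree from \cref{def:echelontree}: (i) the tree is still an index tree (a valid height $\l-1$ tree of partial indices with a consistent post-order), (ii) $T_I(e_I) \neq 0$ for each leaf $I$, and (iii) $T_I(e_J, \cdot, \dots, \cdot) = 0$ whenever $J \prec I$. First I would set up notation: fix the level $i$ at which we collapse, write $\T'$ for the resulting tree, and describe the bijection between leaves of $\T$ and leaves of $\T'$ — collapsing only removes internal nodes at level $i$ and reparents level-$(i+1)$ nodes to their grandparents, so leaves are untouched as abstract objects, only their index labels get the $i$-th and $(i{+}1)$-st coordinates merged into a single coordinate in $[n_i n_{i+1}]$ via the standard identification $(a,b) \mapsto (a-1)n_{i+1} + b$ (or whatever fixed pairing matches the flattening of $\R^{n_i}\otimes\R^{n_{i+1}}$). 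Correspondingly, $T_I$ as an element of $(\R^{n_1})\otimes\cdots\otimes(\R^{n_i}\otimes\R^{n_{i+1}})\otimes\cdots$ is literally the same tensor, just regrouped, so $T_I(e_{I'})$ for the merged index $I'$ equals $T_I(e_I)$ for the original, giving (ii) for free.

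For (i), the main thing to check is that the partial indices on the collapsed tree are still \emph{distinct} and that $\prec$ restricted and re-read on $\T'$ is still the post-order of $\T'$. Distinctness: two level-$(i{+}1)$ nodes of $\T$ with the same grandparent but different parents get merged-index labels that differ because their original level-$(i{+}1)$ labels already differed in coordinate $i$ or $i{+}1$; two with the same parent differ in coordinate $i{+}1$; nodes at other levels are unaffected and inherit distinctness, plus consistency with the tree structure (prefix condition) is immediate since we only merged adjacent coordinates along root-to-leaf paths. For the ordering, I would argue that the post-order of $\T'$ is exactly the restriction of the post-order of $\T$ to the surviving nodes: removing an internal node and splicing its children up to the grandparent, keeping the left-to-right order (the grandparent orders the reparented grandchildren by first comparing via the removed parent's position, then within a parent), does not reorder any pair of surviving nodes — this is the standard fact that post-order traversal is stable under such contractions. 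Hence $J \prec I$ in $\T'$ iff $J \prec I$ in $\T$, for all surviving $J, I$.

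The only genuinely substantive step is (iii), and here the key observation is that the zero-subarray condition only gets \emph{weaker} under collapsing. In $\T$ we know $T_I(e_J, \cdot, \dots, \cdot) = 0$ for every node $J$ (at any level) with $J \prec I$. In $\T'$ the condition we must prove is $T_I(e_{J'}, \cdot, \dots, \cdot) = 0$ for every \emph{leaf-predecessor} $J'$ that is a node of $\T'$ with $J' \prec I$; but $J'$ as a node of $\T'$ corresponds to a node of $\T$ (at some level $\neq i$), and plugging in its basis vector $e_{J'}$ in the merged tensor is the same as plugging in the corresponding basis vector in $T_I$ viewed in $\T$ — if $J'$ sits at level $< i$ this is an honest prefix of the old condition, and if $J'$ sits at level $> i$ (so it survived reparenting or is unaffected) then $e_{J'}$ involves the merged coordinate $e_{a}\otimes e_b$ exactly when $J'$ passes through the collapsed level, and $T_I(e_a \otimes e_b, \cdot,\dots) = T_I(e_a, e_b, \cdot, \dots)$ by the merging-of-inputs convention fixed in \cref{sec:prelim}. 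Since $J' \prec I$ in $\T'$ implies $J' \prec I$ in $\T$ by the ordering-stability step, and since the old echelon condition at $J'$ gives an even larger zero subarray (it zeroes $T_I(e_{J'}, \cdot, \dots, \cdot)$ with all remaining slots free, which certainly includes the merged version), condition (iii) follows. I expect this last bookkeeping — matching the merged-input convention to the reparented nodes and confirming that no \emph{new} predecessor relations are created that were not already present — to be the only place requiring care; everything else is definition-chasing.
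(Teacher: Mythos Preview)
Your proposal is correct and essentially matches what the paper does --- which is nothing: the paper states \cref{fact:collapse} immediately after the sentence ``It is easy to check that these operations preserve the properties in \cref{def:echelontree}'' and gives no further argument. Your write-up is precisely the routine verification the paper leaves to the reader, and the one place you flag as needing care (that collapsing introduces no new $\prec$ relations, so the zero-subarray constraints in $\T'$ are a subset of those in $\T$) is indeed the only nontrivial point.
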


The main question we would like to address here is how large of an echelon tree can be constructed for a subspace $W$. For example, for $\R^{n_1\times\dots\times n_\l}$ one can get a \emph{full} tree, where nodes at level $i-1$ have branching factor $n_i$, by simply placing the standard basis for $\R^{n_1\times\dots\times n_\l}$ at the leaves. We measure the size of a tree by its \emph{fractional branching} factor.

\begin{definition}
	An echelon tree $\T$ has fractional branching $(\alpha_1,\dots, \alpha_\l)\in [0,1]^\l$ if each node $I$ at level $i-1$ has at least $\alpha_i n_i$ children. For a single number $\alpha\in [0,1]$, we say $\T$ has fractional branching $\alpha$ when it has fractional branching $(\alpha, \alpha,\dots,\alpha)$.
\end{definition}

Note that fractional branching $\alpha$ implies that the tree has at least $\alpha^\l n_1\dots n_\l$ leaves. On the other hand, repeated applications of \cref{fact:collapse} on the echelon tree would produce a height 1 echelon tree, and we have already observed that the vectors assigned to the leaves in such a tree must be linearly independent. So this implies that $\dim(W)\geq \alpha^\l n_1\dots n_\l$. There is a partial inverse to this statement: If $W\subseteq \R^{n_1\times\dots\times n_\l}$ has dimension $(1-c^\l)\cdot n_1\dots n_\l$, then there is an echelon tree with fractional branching $1-c$ for $W$. However, this fact is not ``robust'', since the elements of $W$ assigned to the leaves can have arbitrarily small or large entries. Instead we prove the following:

\begin{theorem}
	\label{thm:echelon}
	If $W\subseteq \R^{n_1\times \cdots\times n_\l}$ has dimension $(1-c^\l)\cdot n_1\dots n_\l$, then there is an echelon tree with fractional branching $1-c$ for $W$ such that for every leaf $I$ we have $\norm{T_I}_\infty=1$ and $\abs{T_I(e_I)}=1$.
\end{theorem}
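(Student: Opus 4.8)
The plan is to construct the echelon tree greedily, level by level, mirroring the $\l=1$ argument but organizing the "pivots" into a tree. The key invariant to maintain is a dimension count: at each partial index $I$ we want to keep track of a subspace $W_I \subseteq W$ consisting of tensors that vanish on all indices already processed (in post-order) that are "compatible" with $I$, and we want $\dim(W_I)$ to stay large enough that enough children can be spawned. Concretely, I would process the tree in post-order; when I arrive at a node $I=(i_1,\dots,i_k)$ at level $k$, I will have available a subspace $W_I$ of tensors in $W$ that vanish on $e_J$ for every leaf-descendant $J$ of nodes already completed, and more importantly vanish on the appropriate sub-arrays. The branching at each internal node is then governed by how the dimension drops as we impose vanishing conditions coordinate-by-coordinate along the next axis.

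First I would set up the recursion precisely. Start with $W$ of codimension $c^\l n_1\cdots n_\l$ in $\R^{n_1\times\cdots\times n_\l}$. At the root, I want to pick a set $S_1 \subseteq [n_1]$ of size $\ge (1-c)n_1$ of "good" first coordinates $i_1$, such that after flattening the last $\l-1$ axes, the slice $W(e_{i_1},\cdot,\dots,\cdot)$ (the image of $W$ under restricting the first coordinate to $i_1$, or rather the relevant projection) still has codimension at most $c^{\l-1} n_2\cdots n_\l$ inside $\R^{n_2\times\cdots\times n_\l}$ — this is a counting/averaging argument: if too many first-coordinate slices were deficient, the total codimension of $W$ would exceed $c^\l n_1\cdots n_\l$. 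Then recurse inside each good slice. At the bottom (level $\l-1$, about to produce leaves), I am in exactly the $\l=1$ situation: I have a subspace of $\R^{n_\l}$ of codimension $\le c\, n_\l$, so I can extract $\ge (1-c)n_\l$ pivots $T_I$ with $\norm{T_I}_\infty = 1$ and $\abs{T_I(e_I)}=1$ by the rescaling argument from the $\l=1$ proof. The post-order traversal is what guarantees that when I produce a leaf $T_I$, the earlier leaves $J \prec I$ that need $T_I(e_J,\cdot,\dots,\cdot)=0$ have either already had that vanishing baked into the subspace I recursed into (if $J$ sits under an earlier sibling branch) or are descendants of $I$'s own ancestors handled by the within-slice echelon structure. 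I should invoke \cref{fact:collapse} to justify that the collapsed/flattened tensors at each stage still form echelon data, so the local construction is legitimate.

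The main obstacle is getting the bookkeeping of the vanishing conditions to line up exactly with the post-order relation $\prec$ and with the sub-array (not just single-entry) vanishing demanded by \cref{def:echelontree}. In particular, when $J \prec I$ because $J$ is a descendant of $I$ (case (i) of the $\prec$ definition), I need $T_I(e_J,\cdot,\dots,\cdot)=0$ where $e_J$ is a \emph{full} index — this is a single-entry condition and comes for free from how the leaf is built within its slice. But when $J\prec I$ via case (ii), with ancestors $I',J'$ sharing a parent and $J'$ ordered before $I'$, the requirement $T_I(e_J,\cdot,\dots)=0$ actually needs $T_I$ to vanish on an entire sub-array indexed by the prefix corresponding to $J'$ (since $J$ ranges over descendants of $J'$), and I must ensure that the recursion, when it committed to the branch $I'$, restricted to tensors already vanishing on the whole sub-arrays of all earlier siblings $J'$. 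So the correct thing to carry down the recursion at a node is: the space of tensors in $W$ that vanish identically on the sub-arrays $e_{J'}(\cdot,\dots,\cdot)$ for all earlier-sibling subtrees $J'$ at every ancestor level. I then need to check the codimension of this space is still controlled — each earlier sibling at level $k$ kills at most $n_{k+1}\cdots n_\l$ dimensions, there are $< n_k$ of them per node and $< n_1\cdots n_{k-1}$ such nodes, and summing over $k$ keeps the total codimension below $c^\l n_1\cdots n_\l$ provided the branching factors were chosen as $(1-c)$ at each level; making this inequality clean is the delicate step. Once the dimension accounting closes, the normalization $\norm{T_I}_\infty=\abs{T_I(e_I)}=1$ is immediate by scaling, exactly as in the $\l=1$ case, and the fractional branching $1-c$ is read off from the sizes $\card{S_k}\ge(1-c)n_k$.
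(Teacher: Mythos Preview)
Your plan has a genuine gap at the normalization step. You recurse on the \emph{projection} $W(e_{i_1},\cdot,\dots,\cdot)\subseteq \R^{n_2\times\cdots\times n_\l}$, build leaf tensors there with $\norm{\cdot}_\infty=1$ and pivot value $1$, and then lift back to $W$. But a lift of a tensor with $\norm{\cdot}_\infty=1$ in the slice need not have $\norm{\cdot}_\infty=1$ in the full space: the entries of the lift outside the $i_1$-slice are unconstrained and can be arbitrarily large. Concretely, take $\l=2$, $n_1=2$, $n_2=1$, and $W=\spn\{(1,2)\}$. Both projections onto $i_1=1$ and $i_1=2$ are all of $\R$, so your averaging step does not distinguish them; if you commit to $i_1=1$, the only lift is $(1,2)$ up to scaling, and no scaling achieves $\norm{T}_\infty=\abs{T(e_{(1,1)})}=1$. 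The robust conclusion $\norm{T_I}_\infty=\abs{T_I(e_I)}=1$ is exactly what separates \cref{thm:echelon} from the soft ``there exists an echelon tree'' statement preceding it, so this is not a cosmetic issue. Your later variant, carrying down $W_I\subseteq W$ and only projecting at level $\l-1$, has the same problem at that final step: you need the $\infty$-norm of a full tensor in $W_I$ to be attained at an index with prefix $I$, and the dimension count alone does not force that.

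The paper sidesteps this by never projecting. It proves the stronger statement with branching vector $(\alpha_1,\dots,\alpha_\l)$ subject to $\prod_k(1-\alpha_k)\geq 1-\dim(W)/\prod_k n_k$, inducting on $\l$. For the step it \emph{collapses} levels $1$ and $2$ (so $W$ sits in $\R^{n_1n_2\times n_3\times\cdots\times n_\l}$), applies the $\l{-}1$ hypothesis to obtain a height-$(\l{-}1)$ echelon tree whose leaf tensors already live in $W$ with $\norm{T_I}_\infty=\abs{T_I(e_I)}=1$, and then uses pigeonhole on the $\beta n_1n_2$ level-$1$ indices to find $\beta n_2$ of them sharing a first coordinate $i_1\in[n_1]$. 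Those subtrees are reattached under a new node $i_1$ to form one height-$\l$ branch; one then passes to $\{T\in W:T(e_{i_1},\cdot,\dots,\cdot)=0\}$ and repeats. Because every leaf tensor came from the induction hypothesis applied to (a subspace of) $W$ itself, the normalization is inherited for free; the only bookkeeping is the single inequality $(1-\gamma)(1-\beta)(1-\alpha_3)\cdots(1-\alpha_\l)\geq 1-\dim(W)/\prod n_k$, which holds while $\gamma<\alpha_1$ once $\beta=\alpha_2$.
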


Let us see first see why \cref{thm:echelon} is enough to prove \cref{lem:main}.

\begin{proof}[Proof of \cref{lem:main} for general $\l$]
	Note that $\norm{T_I}_\infty=1$ implies that $\norm{T_I}\leq n^{\l/2}$. So it suffices to show that $T_I(\chi^{(1)},\dots,\chi^{(\l)})\geq \delta^\l$ for some $I$ with high probability.
	
	Let us say that an echelon tree is $x$-large when $\card{T_I(e_I)}\geq x$ for all leaves $I$. \Cref{thm:echelon} guarantees that the echelon tree produced by it is $1$-large.
	
	Our strategy is to fix $\chi^{(\l)}, \chi^{(\l-1)},\dots, \chi^{(1)}$ in that order, and simultaneously reduce the height of our echelon tree by $1$ each time. When we fix $\chi^{(\l)}$, we can get a smaller echelon tree in the following way: For each leaf $I$ in the echelon tree, consider the reduced tensor $T_I(\cdot, \dots, \chi^{(\l)})\in \R^{n_1\times \dots\times n_{\l-1}}$ as a candidate tensor for the parent of $I$. Now let $J$ be a node of level $\l-1$. Its children have produced candidate tensors for $J$. Pick the candidate $T$ with the highest $\card{T(e_J)}$ to be $T_J$. In this way we have removed the lowest level of the tree and have assigned appropriate tensors to the new leaves.
	
	Our goal is to prove that if we start with an $x$-large echelon tree, then with high probability the next echelon tree is $\delta x$-large. Inductively this would prove that with high probability over the choice of $\chi^{(1)},\dots, \chi^{(\l)}$, we have $T_I(\chi^{(1)},\dots,\chi^{(\l)})\geq \delta^\l$ for some leaf $I$ of the original echelon tree, completing the proof.
	
	For a fixed node $J$ of level $\l-1$, we want to show that the quantity $T_I(e_J, \chi^{(\l)})$ is at least $\delta x$ in magnitude for some child $I$ of $J$. But this is very similar to the $\l=1$ case of \cref{lem:main}, which we have already proved. The difference is that the pivots are not necessarily equal to $1$, but are at least $x$ in magnitude. This implies that
	\[ \Pr{\forall I\text{ child of }J: \card{T_I(e_J, \chi^{(\l)})}\leq \delta x}\leq p^{(1-c)n}. \]
	The number of nodes at level $\l-1$ is at most $n^{\l-1}$, so by a union bound, we get that with probability at least $1-n^{\l-1}p^{(1-c)n}$, the tree produced at the next level is $\delta x$-large (the union bound is over fewer than $n^{\l-1}$ events, each corresponding to one $J$). Induction completes the proof.
\end{proof}

Now we give a proof of \cref{thm:echelon}. We use induction to prove a stronger version. \Cref{thm:echelon} will be a corollary of the following by setting $\alpha_1=\dots=\alpha_\l=1-c$.
\begin{theorem}
	If $W\subseteq \R^{n_1\times\cdots\times n_\l}$ is a subspace, and $\alpha_1,\dots,\alpha_\l\in[0,1]$ are such that
	\[ (1-\alpha_1)(1-\alpha_2)\cdots(1-\alpha_\l)\geq 1-\dim(W)/(n_1\cdots n_\l), \]
	then there is an echelon tree for $W$ with fractional branching $(\alpha_1,\dots,\alpha_n)$ such that for each leaf $I$ we have $\norm{T_I}_\infty=1$ and $\abs{T_I(e_I)}=1$.
\end{theorem}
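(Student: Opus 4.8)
The plan is to prove this by induction on $\l$; the reason for stating it with a whole vector $(\alpha_1,\dots,\alpha_\l)$ rather than a single $\alpha$ is precisely that this is what makes the induction self-sustaining. The base case $\l=1$ is Gaussian elimination with maximum pivoting: the hypothesis reads $\dim(W)\ge\alpha_1 n_1$, so repeatedly pick any nonzero vector of the current subspace, rescale so that a coordinate of largest magnitude becomes $1$, declare that coordinate the pivot, append it as a leaf, and pass to the subspace of vectors vanishing at all pivots chosen so far; this produces $\dim(W)\ge\alpha_1 n_1$ leaves, each with $\norm{T_I}_\infty=1$ and $\abs{T_I(e_I)}=1$, and the echelon identities hold by construction.

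For the inductive step I would build the tree top-down, one level at a time, beginning with the children of the root. Write $M=n_2\cdots n_\l$, $N=n_1 M$, $\gamma=\prod_{i\ge2}(1-\alpha_i)$, and $\phi_i(T):=T(e_i,\cdot,\dots,\cdot)$, so the inductive hypothesis applies, with branching $(\alpha_2,\dots,\alpha_\l)$, to any subspace of $\R^{n_2\times\cdots\times n_\l}$ of dimension at least $M(1-\gamma)$. Pick the root's children $i_1^{(1)},\dots,i_1^{(t)}$, $t=\lceil\alpha_1 n_1\rceil$, one at a time; after $r-1$ of them, let $W^{(r)}=\{T\in W:\phi_{i_1^{(s)}}(T)=0\text{ for }s<r\}$, which has codimension $\le(r-1)M$ in $W$. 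For each $i_1^{(r)}$, recurse on a subspace $W_r\subseteq\phi_{i_1^{(r)}}(W^{(r)})\subseteq\R^{n_2\times\cdots\times n_\l}$ to obtain a height-$(\l-1)$ echelon tree for $W_r$, lift each of its leaf tensors $S_{I''}$ to an element $T_{(i_1^{(r)},I'')}\in W^{(r)}\subseteq W$ with $\phi_{i_1^{(r)}}(T_{(i_1^{(r)},I'')})=S_{I''}$, and hang the resulting subtree under $i_1^{(r)}$, ordering the level-$1$ subtrees so that $i_1^{(1)}\prec\dots\prec i_1^{(t)}$ in post-order. The echelon identities $T_{I'}(e_J,\cdot,\dots,\cdot)=0$ for $J\prec I'$ then follow routinely, mirroring the fiberwise reduction in the proof of \cref{lem:main}: for $J$ in an earlier level-$1$ subtree this is the defining property of $W^{(r)}$, for $J$ in the same subtree it is the recursively guaranteed identity for $S_{I''}$, and for $J$ in a later subtree $I'\prec J$ and there is nothing to prove. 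The dimension accounting is tight but works: for $r\le t$ one has $\dim(W^{(r)})\ge\dim(W)-(r-1)M$ with $r-1\le\alpha_1 n_1$, at most $n_1-(r-1)$ slices act nontrivially on $W^{(r)}$, and substituting the hypothesis $\dim(W)\ge(1-(1-\alpha_1)\gamma)N$ gives that some active slice $i$ has $\dim\phi_i(W^{(r)})\ge M(1-\gamma)$, exactly what the recursion needs.

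The genuinely delicate point, which I expect to be the main obstacle, is the \emph{sup-norm} bookkeeping in the lift. We need $\norm{T_{(i_1^{(r)},I'')}}_\infty=1$ and $\abs{T_{(i_1^{(r)},I'')}(e_{(i_1^{(r)},I'')})}=1$; since the lift has slice $S_{I''}$ at $i_1^{(r)}$ and $\norm{S_{I''}}_\infty=\abs{S_{I''}(e_{I''})}=1$, this is exactly the demand that the pivot slice be a maximum-magnitude slice of the lift. That is not automatic: the cheapest lift of a unit vector of $\phi_i(W^{(r)})$ can have sup-norm strictly larger than $1$, so an arbitrary section of $\phi_{i_1^{(r)}}$ will not do, and merely inverting $\phi_{i_1^{(r)}}$ on a family of echelon generators breaks once they are combined. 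The fix has to be to choose the pivot slice $i_1^{(r)}$ together with the dimension-$M(1-\gamma)$ subspace $W_r\subseteq\phi_{i_1^{(r)}}(W^{(r)})$ so that $\phi_{i_1^{(r)}}$ admits a linear section over $W_r$ that is isometric for $\norm{\cdot}_\infty$, and then to take the lift to be this section applied to $S_{I''}$. I would try to produce such a slice by running Gaussian elimination with maximum pivoting on $W^{(r)}$, grouping the echelon generators by the first coordinate of their pivot, and taking $i_1^{(r)}$ to be the most frequent such coordinate (which is also what makes the slice count, hence the dimension, come out); the heart of the proof is then to argue that one can carve out of the spans of the associated slices a subspace of the needed dimension on which no off-pivot entry of the lift exceeds the sup-norm of its own slice --- the one place where the fine structure of maximum-pivoting elimination, and not just its combinatorics, is used. (A possible alternative, if that is too fragile, is to strengthen the inductive statement to carry the ambient subspace $W^{(r)}$ explicitly, proving an echelon-tree statement for a prescribed projection of a given subspace, with leaf tensors normalized by the sup-norm of their minimal lift rather than their own.) With the isometric section in hand, both normalization identities hold for the lifted leaves and the induction closes.
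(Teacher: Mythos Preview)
Your induction scheme and base case match the paper, and your dimension accounting for the level-$1$ pivots is correct. The genuine gap is exactly where you put your finger on it: the lift. You recurse on a slice subspace $W_r\subseteq\phi_{i_1^{(r)}}(W^{(r)})\subseteq\R^{n_2\times\cdots\times n_\l}$, obtain leaf tensors $S_{I''}$ with $\norm{S_{I''}}_\infty=1$, and then need preimages in $W^{(r)}$ whose sup-norm is still $1$. You do not establish this; the proposed ``isometric section'' is a hope, not an argument. Running max-pivot elimination on $W^{(r)}$ and grouping by the first pivot coordinate does give you a family of tensors in $W^{(r)}$ with $\norm{\cdot}_\infty=1$ whose $i_1^{(r)}$-slices are sup-norm $1$, but the recursion will take \emph{linear combinations} of these slices, and once you form those combinations the corresponding lift (the same combination of the full tensors) can blow up in the off-pivot slices. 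Nothing in the max-pivot structure prevents that, and your fallback of strengthening the inductive statement to normalize by the sup-norm of a minimal lift would destroy the conclusion $\abs{T_I(e_I)}=1$, which is what the downstream argument in \cref{lem:main} actually uses.

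The paper avoids the lift altogether by reversing the order of your two operations. Instead of ``slice, recurse, lift'', it does ``merge, recurse, split'': view $W\subseteq\R^{(n_1n_2)\times n_3\times\cdots\times n_\l}$ and apply the induction hypothesis with branching $(\alpha_2,\alpha_3,\dots,\alpha_\l)$ to get a height-$(\l-1)$ echelon tree \emph{whose leaf tensors already live in $W$} and already satisfy $\norm{T_I}_\infty=\abs{T_I(e_I)}=1$. The level-$1$ labels of this tree lie in $[n_1n_2]\simeq[n_1]\times[n_2]$; by pigeonhole at least $\alpha_2 n_2$ of them share a first component $i_1$, and those subtrees, hung under the new root-child $i_1$, give one height-$\l$ branch with all normalizations intact. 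Then pass to $\set{T\in W: T(e_{i_1},\cdot,\dots,\cdot)=0}$, regarded as a subspace of $\R^{(n_1-1)n_2\times n_3\times\cdots\times n_\l}$, and repeat; the dimension bookkeeping shows this can be iterated $\alpha_1 n_1$ times. No section, no lift, no sup-norm control to check: because the recursion runs on $W$ itself (with a coarser first index), the leaf tensors never leave $W$ and never need to be renormalized.
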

\begin{proof}
	We use induction on $\l$. For the base case of $\l=1$, we have $\alpha_1\geq \dim(W)/n_1$ and we want an echelon tree with branching factor $\alpha_1n_1\leq \dim(W)$. We have already proved this case.
	
	Now assume we have proved the statement for $\l-1$ and want to prove it for $\l$. Consider partially flattening the tensor space by merging the first two dimensions, i.e., considering $W$ as a subspace of $\R^{n_1n_2\times n_3\times\cdots \times n_\l}$. Let us fix $\beta\in [0,1]$ such that the premise of the induction hypothesis holds and we can get an echelon tree of height $\l-1$ with fractional branching $(\beta,\alpha_3,\dots,\alpha_\l)$. Nodes at level $1$ of this tree have indices in $[n_1n_2]$, and there are $\beta n_1n_2$ of them. Considering these indices as living in $[n_1]\times [n_2]$, by the pigeonhole principle at least $\beta n_1n_2/n_1=\beta n_2$ of them will have the same first component; let's call this component $i_1\in [n_1]$. We can now extract the subtrees of these $\beta n_2$ elements and join them into an echelon tree of height $\l$. The common parent of these nodes will have index $i_1$. So far we have constructed an echelon tree of height $\l$ with fractional branching $(1/n_1,\beta,\alpha_3,\dots,\alpha_\l)$.
	
	Now consider the subspace $\set{T\in W\given T(e_{i_1},\cdot,\dots,\cdot)=0}$. We think of $W$ as living in $\R^{(n_1-1)n_2\times n_3\times \cdots \times n_\l}$, since index $i_1$ has been eliminated from the first dimension. We can again apply the induction hypothesis to this space and as long as the premise holds obtain an echelon tree of height $\l-1$ with fractional branching $(\beta,\alpha_3,\dots,\alpha_\l)$. We can apply the pigeonhole principle again to find $\beta (n_1-1)n_2/(n_1-1)=\beta n_2$ level-$1$ nodes having the same first index $i_2$. We extract a height $\l$ echelon tree from them and join this with the height $\l$ echelon tree we already have. At the end we will have an echelon tree with fractional branching $(2/n_1,\beta,\alpha_3,\dots,\alpha_\l)$.
	
	Suppose we have repeated this procedure $\gamma n_1-1$ many times and currently have a height $\l$ echelon tree with fractional branching $(\gamma,\beta,\alpha_3,\dots,\alpha_\l)$. As long as the premise of the induction hypothesis holds we can grow this echelon tree. The current subspace is $\set{T\in W\given W(e_{i_j},\cdot,\dots,\cdot)=0 \text{ for } j\in[\gamma n_1]}$ which lives in $\R^{(1-\gamma)n_1n_2\times n_3\times\cdots\times n_\l}$. The dimension of this subspace is at least $\dim(W)-\gamma n_1n_2\cdots n_\l$. So the premise of the induction hypothesis holds as long as
	\[ (1-\beta)(1-\alpha_3)\cdots (1-\alpha_\l)\geq 1-\frac{\dim(W)-\gamma n_1\cdots n_\l}{(1-\gamma)n_1n_2\cdots n_\l}=\frac{1-\dim(W)/n_1\cdots n_\l}{1-\gamma}. \]
	This means that as long as $(1-\gamma)(1-\beta)(1-\alpha_3)\cdots (1-\alpha_\l)\geq 1-\dim(W)/n_1\cdots n_\l$, we can grow the echelon tree.
	
	To finish the proof, we set $\beta=\alpha_2$, which means that while $\gamma<\alpha_1$, we can grow the echelon tree. So when this procedure stops we have an echelon tree with fractional branching $(\alpha_1,\dots,\alpha_\l)$.
\end{proof}

\subsection{Implications for the main question}

Our result, \cref{thm:main}, together with results from \cite{BCMV14} (see the supplementary material), imply that under very mild assumptions we can recover $\cS_1,\dots,\cS_n$ from their $\l$-wise intersections as long as $\card{\cU}\leq n^{\Theta(\l)}$. These mild assumptions are necessary to prevent adversarially constructed examples that have no hope of unique recovery.
	
	To get a sense of the mild assumptions that we need, let us discuss the parameters that appear in \cref{thm:main}. We assume that $\l$ is a constant that does not grow with $n$. We can take $c$ to be some fixed constant as well. For example $1/2$, or even $1/\sqrt[l]{2}$. If we perturb our cell assemblies according to \cref{def:p-perturb}, i.e., flip assembly memberships for each neuron class and assembly pair with probability $q$, how large of a $q$ do we need for the conditions of \cref{thm:main} and \cite{BCMV14} to be satisfied? The distribution we get for $\chi(u)$s is going to be \nondet{1/2, 1-q} as long as $q\leq 1/2$. So $\delta=1/2$ is a constant. The only condition we need is now for the failure probability to be small. This roughly translates to
	\[ n^{O(\l)}(1-q)^{(1-c)n}\ll 1, \]
	which will be satisfied for $q=\Omega(\log n/n)$. In other words, we only have to flip each coordinate of $\chi(u)$ with probability $O(\log n/n)$. On average, each neuron's membership will be changed in about $O(\log(n))$ of the assemblies, which is a very small fraction of the assemblies. For slightly larger values of $q$, e.g., $q=n^{\epsilon-1}$, the probability of failure becomes exponentially small similar to \cite{BCMV14}.

	We also assumed that $w(u)=1$ for all $u\in \cU$. In general this is not needed. As long as the weights $w(u)$ are in a range whose upper bound is at most a polynomially bounded factor larger than the lower bound, we can absorb the weights into the vector $\chi(u)$ and the running time and accuracy will only suffer by a polynomially bounded factor. 
	
	We also remark that recovering a $\set{0,1}^n$ vector within an additive error of $1/n$ is the same as exact recovery (by rounding the coordinates). So by setting the recovery error (see supplementary material) to $1/n$ we get exact recovery.
	
	Finally, we remark that even though we are mostly interested in the case where $\l=O(1)$, our dependencies on $\l$ seem to be better than the results of \cite{BCMV14} even in the setting of Gaussian perturbations. In particular, our running time (as well as our tolerance for error) grows polynomially with $n^\l$, whereas the running time of \cite{BCMV14} grows with $n^{3^\l}$. When adding Gaussian noise of total variance $\rho^2$ as in \cref{def:gauss-perturb}, we can treat our vectors as coming from a \nondet{O(\rho/\sqrt{n}), 1/2} distribution. This means our probability of failure will be at most $n^{2\l}/2^{(1-c)n}$. To have a fair comparison, we need to allow for the number of components to be roughly half the total dimension, so we need to let $c=1/\sqrt[\l]{2}\simeq 1-\Theta(1/\l)$. So the probability of failure will be roughly $\exp(O(\l\log n)-\Omega(n/\l))$. For large enough values of $\l$ this is much better than the guarantee of $\exp(-\Theta(n^{1/3^\l}))$ of \cite{BCMV14}.
	
	\section{Association graphs and the soft model}
When the number of observations is smaller than what is needed for reconstruction, we can still ask whether there exists \emph{ some} Venn diagram that is consistent with the observations.  \emph{ Which classes of weighted  graphs (or hypergraphs) can be represented by Venn diagrams?}

Interestingly, a similar model was formulated almost three decades ago, motivated by quantum mechanics and spin glass systems, and a mathematical object called \emph{ correlation polytope} was defined to frame that investigation \cite{Pitowsky91}.  It is not hard to show that membership in the polytope is an NP-hard problem and natural optimization variants of it are hard to approximate. 

In this section we formulate a promise version of the problem where either the intersection is above a certain threshold (corresponding to association) or below another (corresponding to non-association) which seems to be more tractable. 

More precisely, we are given a graph that is \emph{ unweighted.} The nodes still stand for assemblies of neurons, all of the same size $K$, out of a universe of $N$ neurons, and the edges signify association; the difference is that, in this model, if two assemblies are associated then they have an intersection of size at least $a$; whereas if they are not, then their intersection is at most $b$.  The intended relationship between these numbers is that $N$ is much larger than $K$ (we take it to be a power of $K$), and $K$ is in turn much larger than $a$, while $a$ is quite a bit larger than $b$.  To fix ideas, in the sequel we take $N=K^{2}$ and $b < a$  small constant fractions of $K$; in the experiment in \cite{IQF15, DIFQ16} $a$ and $b$ are found to be about $8\%$ and $4\%$ of $K$, respectively.  We call a graph $G=(V,E)$ \emph{ representable} with parameters $(N, K, a, b)$ if every node of $G$ can be associated with a set of $K$ neurons such that for any two adjacent nodes the corresponding sets have intersection at least $a$, while for any two non-adjacent nodes the corresponding sets have intersection at most $b$.  The question is, \emph{ which graphs are representable?}

\begin{theorem}
	Any graph of maximum degree at most $2K/a$ is representable, and so is any tree of maximum degree $2K^2/a^2$.
\end{theorem}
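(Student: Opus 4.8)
My plan is to exhibit explicit representations. I will work inside a ``grid'' universe: writing $N=K^2$ and identifying $[N]$ with $[K]\times[K]$, I assign to each node $v$ a set $S_v$ which is the graph of a function $f_v\colon[K]\to[K]$. Then $\card{S_v}=K$ automatically, and $\card{S_u\cap S_v}$ equals the number of coordinates $i$ with $f_u(i)=f_v(i)$. So representability reduces to designing functions $f_v$ that agree in at least $a$ places whenever $v\sim w$ and in at most $b$ places whenever $v\not\sim w$. (If more room is needed one replaces a single grid by a product of grids, keeping $N$ a power of $K$; this is what the tree case exploits.)

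For the first statement I would proceed as follows. Properly edge-color $G$; since $\Delta(G)\le 2K/a$, Vizing's theorem uses at most $2K/a+1$ colors. Cut the coordinate set $[K]$ into that many length-$a$ cyclic intervals $C_1,C_2,\dots$ on $\mathbb Z_K$, spaced evenly; when the number of intervals is $2K/a$ each pair of them meets in at most $a/2$ coordinates and consecutive ones in exactly $a/2$, so in the experimentally relevant regime $b\approx a/2$ any two meet in at most $b$ coordinates. Declare $I_e=C_{c(e)}$ for an edge $e$ of color $c(e)$; properness guarantees the two endpoints of $e$ agree on which interval it owns, and two edges sharing a vertex own intervals meeting in at most $b$ positions. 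On $I_e$, force $f_u$ and $f_v$ to equal a single codeword $g_e$, chosen per edge from a code of minimum distance close to $a$ (or chosen at random); off the intervals of its incident edges, let $f_v$ take values used nowhere else. Then $f_u$ and $f_v$ agree on all of $I_e$ along any edge $e=uv$, giving $\card{S_u\cap S_v}\ge a$. For a non-edge $\{u,w\}$, a coordinate of agreement must lie in some $I_e\cap I_f$ with $e\ni u$, $f\ni w$ (where $g_e,g_f$ happen to collide) or involve a ``private'' value; the former happens in at most $b$ positions because of the interval structure together with the distance of the code, and the latter happens with probability $1/K$ per coordinate, so a Chernoff estimate and a union bound over non-edges bound every $\card{S_u\cap S_w}$ by $b$.

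For a tree of maximum degree $2K^2/a^2$ I would root it and build the sets by a top-down recursion, handing each subtree its own fresh block of a (larger, still $K$-power-sized) universe so that nodes in different subtrees share essentially nothing. The only non-adjacency constraints that remain are between a node and its grandparent and among siblings, i.e.\ among the at most $\Delta$ blocks a single node $v$ distributes inside its own $S_v$. Because these interactions are now purely local to $S_v\cong[K]$ rather than being coupled through a global edge-coloring, $v$ can afford roughly $(K/a)^2$ such blocks instead of $K/a$: refine $S_v$ itself into a grid and take the blocks to be an essentially-disjoint family whose pairwise intersections are controlled by the inequality $a^2/K\le b$, which is exactly where the bound $2K^2/a^2$ enters. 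One then checks that iterating the recursion keeps the total universe a fixed power of $K$.

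The crux is the third requirement: forcing every one of the up to $\Theta(n^2)$ non-adjacent intersections below $b$, simultaneously with $\card{S_v}=K$ and $\card{S_u\cap S_v}\ge a$ along edges. The first two requirements push the shared blocks at a vertex to overlap heavily --- there is simply no room for more than $K/a$ pairwise-disjoint blocks of size $a$ inside a set of size $K$ --- whereas the third caps every pairwise overlap at $b$; reconciling them is what forces the edge-coloring/grid packing above and what pins down the constants $2K/a$ and $2K^2/a^2$. The parts needing genuine care are the estimate controlling the non-edge intersections (choosing the codewords $g_e$, or the randomness, so that spurious agreements on interval overlaps stay below $b$), and verifying that the tree recursion does not inflate the universe beyond a power of $K$.
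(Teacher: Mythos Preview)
The paper offers only a one–line hint for each half of the theorem: the $2K/a$ bound ``follows from the fact that the edges of a regular Eulerian graph can be decomposed into cycles,'' and the $2K^2/a^2$ bound for trees ``from the theory of block designs.'' Your proposal takes a different route on both counts, so let me assess it on its own terms.

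For the $2K/a$ bound there is a genuine gap. Because you place $2K/a$ length-$a$ intervals evenly on $\mathbb Z_K$, consecutive intervals overlap in $a/2$ positions. Take a vertex $v$ with two incident edges $e,e'$ whose colours happen to be consecutive: on $I_e\cap I_{e'}$ you are forcing $f_v$ to equal both $g_e$ and $g_{e'}$, so $g_e$ and $g_{e'}$ must coincide there. These equalities are not local; they chain across the whole graph through the relation ``share a vertex and receive adjacent colours'' on edges, and your suggestion to pick the $g_e$ independently at random, or from an error-correcting code, is incompatible with satisfying all of them simultaneously. Even if consistency is somehow achieved, the forced equalities create unwanted agreements between the \emph{other} endpoints: if $u\in e$ and $w\in e'$ are non-adjacent, $f_u$ and $f_w$ already agree on $I_e\cap I_{e'}$ via the common value dictated at $v$, and a single non-adjacent pair can accumulate several such contributions through different intermediate vertices, pushing $\lvert S_u\cap S_w\rvert$ above $b$. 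The factor $2$ in $2K/a$ is exactly the delicate point: with \emph{disjoint} intervals your scheme would go through cleanly, but then Vizing only reaches degree about $K/a$. The paper's cycle-decomposition hint is aimed at precisely this factor, pairing up the edges at each vertex so that $d/2\le K/a$ disjoint size-$a$ blocks suffice and no overlap (hence no consistency problem) arises.

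For the tree bound you are closer to the paper. The inequality $a^2/K\le b$ you isolate is exactly what allows a family of roughly $K^2/a^2$ size-$a$ subsets of a $K$-set with all pairwise intersections at most $b$; that packing/design statement is what the paper invokes, and it already handles the star case. However, your ``fresh block of the universe per subtree'' device is not free: on a path of length $n$ it consumes $\Theta(nK)$ fresh elements, so the assertion that the recursion ``keeps the total universe a fixed power of $K$'' needs an actual argument, or else the construction must reuse elements between far-apart vertices while still keeping every non-adjacent intersection at most $b$.
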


The $2K/a$ bound follows from the fact that the edges of a regular Eulerian graph can be decomposed into cycles, while the $2K^2/a^2$ follows from the theory of block designs.  Recalling that $a$ is a small fraction of $K$, we conclude that rather rich and complex ``association graphs'' can be represented \emph{ in principle}.  But can these sophisticated combinatorial constructions be carried out with surgical precision in the wet chaos of the brain?

Here is a more realistic framework which we call \emph{ the soft model:} Suppose that we are given an association graph $G =(V,E)$.  We wish to determine whether a model of $G$ exists, i.e., $\card{V}$ sets corresponding to nodes of $G$ whose pairwise intersections realize $G$ according to the rules above involving $a$ and $b$.  We wish to create sets of expected size $K$ representing the nodes, starting from the universe of neurons $[N]$ and executing instructions of the following form (in the following $C,C_1, C_2$ are previously constructed sets, and $A$ is the set being constructed):
\[ A\leftarrow C_1\cup C_2,\qquad A\leftarrow C_1\cap C_2,\qquad A\leftarrow C_1-C_2,\qquad A\leftarrow S(C,p),
\]
where by $S(C,p)$ we denote the result of sampling each node in set $C$ with probability $p$ --- a simple and realistic enough primitive.  The question is, which graphs can be realized in such a way that the intended relations between the nodes and their intersections are not corrupted, with high enough probability, by the randomness of the process?  We can show the following:

\begin{theorem}
Any graph with maximum degree $\frac{1}{e}\cdot \frac{K}{a}$ can be realized in the soft model with high probability.
	\end{theorem}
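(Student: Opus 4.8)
The plan is to realize $G$ by a randomized version of the ``disjoint shared blocks'' construction underlying the deterministic bound $2K/a$: for each edge we sample a small common block of neurons and place it in both endpoints, and the factor $e$ will be exactly the overhead of producing pairwise disjoint blocks of a prescribed size out of a budget of only $\approx K$ neurons using the primitive $S(\cdot,p)$.

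First I would pass to a normal form. Since only the adjacency relation matters (edges want intersection $\geq a$, non-edges $\leq b$), I may assume $G$ is $D$-regular with $D=\floor{\tfrac1e\cdot\tfrac Ka}$ by embedding $G$ as an \emph{induced} subgraph of a $D$-regular graph on $O(\card{V(G)})$ vertices; a realization of the host graph restricts to one of $G$, because induced non-edges stay non-edges. Then I fix a proper edge coloring of $G$ into matchings $M_1,\dots,M_{D+1}$ (Vizing) and split the universe $[N]$ into one block $U_c$ per color. The point of the coloring is that within a color class the edges are vertex-disjoint, so the blocks sampled for them need only be decorrelated rather than literally disjoint, while disjointness \emph{across} colors is automatic.

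For a fixed vertex $v$, I build $S_v$ by processing its $D$ incident edges in color order, keeping a ``free budget'' $\Pi_v$ that starts as a set of $\approx K$ neurons. When the edge $e=\{u,v\}$ is processed I sample a shared block $C_e\leftarrow S(\cdot,\tfrac1D)$ out of the still-free part of $v$'s budget (coordinated so that $u$ and $v$ produce the \emph{same} set $C_e$, using that $e$ occupies the same color slot at both endpoints), adjoin $C_e$ to $S_u$ and $S_v$, and delete it from $\Pi_u$ and $\Pi_v$. After all $D$ rounds the budget has shrunk by a factor $(1-\tfrac1D)^{D}\to e^{-1}$, so the $D$ blocks attached to $v$ are pairwise disjoint and together occupy the removed $(1-e^{-1})$ fraction, i.e.\ at most $K$ neurons, and the surviving $\approx K/e$ neurons form the private part of $S_v$. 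The quantitative heart is that the $i$-th block sampled this way has expected size $\approx \tfrac KD(1-\tfrac1D)^{i-1}$, so the \emph{smallest} of them, the $D$-th, has expected size $\approx \tfrac{K}{eD}$; demanding that this be $\geq a$ is exactly the hypothesis $D\leq\tfrac1e\cdot\tfrac Ka$. With the blocks in place, every edge $\{u,v\}$ has $S_u\cap S_v\supseteq C_e$, hence intersection $\geq a$; for a non-edge $\{u,v\}$ there is no common block and $S_u\cap S_v$ is a union of intersections of \emph{independent} random samples living in a universe of size $N=K^2$, whose expected size is $\ll b$.

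Finally I would upgrade everything to ``with high probability'' by concentration: a Chernoff bound shows each of the $\leq \card{V(G)}\,D$ sampled blocks has size within a $(1\pm o(1))$ factor of its mean and each of the $\leq\binom{\card{V(G)}}{2}$ non-edge intersections stays below $b$, and a union bound over these events finishes the proof; this needs $a$ (hence $K$) to dominate $\log\card{V(G)}$, which is the mild regime the statement is meant for (and is essentially forced by $N=K^2$). The step I expect to be the real obstacle is the bookkeeping in the previous paragraph: making the sampled block $C_e$ genuinely shared while keeping, \emph{at every vertex simultaneously}, the incident blocks pairwise disjoint and of the right sizes; and, on the non-edge side, bounding uniformly well enough the lower-order random overlaps — those created by common neighbors and by neurons reused across different color classes — so that they survive the union bound.
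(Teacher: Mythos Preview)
The paper does not actually include a proof of this theorem: it is stated at the end of the section on association graphs with only the preface ``We can show the following,'' and the appendix treats only the tensor--decomposition material. So there is nothing in the paper to compare your proposal against directly.

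On its own merits, your outline has the right overall shape and correctly locates a plausible source of the constant $1/e$: if at a vertex one peels off blocks by repeatedly sampling a $1/D$ fraction of the remaining budget, the $D$-th block has expected size roughly $K(1-1/D)^{D-1}/D\approx K/(eD)$, and asking this to be at least $a$ gives $D\le K/(ea)$. The non-edge analysis via independent samples in a universe of size $K^2$ and the Chernoff--union-bound wrap-up are routine in the intended regime.

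However, the step you yourself flag as ``the real obstacle'' is a genuine gap, not just bookkeeping. You require $C_e$ to be simultaneously (i) contained in $u$'s current budget $\Pi_u$, (ii) contained in $v$'s current budget $\Pi_v$, and (iii) literally the same set at both endpoints. But in your description $\Pi_u$ and $\Pi_v$ begin as independent random sets of size $\approx K$ inside $[N]=[K^2]$, so $\card{\Pi_u\cap\Pi_v}\approx 1$, far too small to carve out a shared block of size $a$. The color partition $U_1,\dots,U_{D+1}$ does not rescue this: it makes blocks of \emph{different} colors automatically disjoint at a vertex, which eliminates the need for the shrinking-budget mechanism and with it your derivation of the $1/e$. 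In other words, the two devices you invoke---per-color pools and per-vertex budgets---pull in opposite directions, and as written neither one alone produces a construction that both shares $C_e$ across the edge and exhibits the geometric decay. To close the gap you would need to specify concretely what set $S(\cdot,1/D)$ is applied to so that both endpoints see the same outcome, or else sample $C_e$ once per edge from a common pool and recover the $1/e$ from a different accounting of overlaps.
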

	
	\bibliography{references}
	
	\clearpage
	
\appendix

\section{Reduction to linear independence}
	We now mention the main result of \cite{BCMV14}:
	
	\begin{theorem}[\cite{BCMV14}]
		\label{thm:bcmv}
		Let $\card{\cU}\leq n^{\floor{\frac{\l-1}{2}}}/2$ for some constant $\l$. Assume that for each $u\in \cU$ and $i\in [\l]$, we choose a vector $\chi(u)^{(i)}\in \R^n$ by starting from an adversarially chosen vector $\chi(u)^{(i)}_*$ of norm at most $1$ and adding a standard Gaussian noise with variance $\sigma^2/n$ to each coordinate of $\chi(u)^{(i)}_*$. Now define the order-$\l$ tensor
		\[ T:=\sum_{u\in \cU} \chi(u)^{(1)}\otimes \dots\otimes \chi(u)^{(\l)}, \]
		and assume that we get as input $T+E$ where $E$ is an order-$\l$ (measurement error) tensor, whose entries are bounded by $\epsilon (\sigma/n)^{3^\l}$ for some $\epsilon<1$. Then there is an algorithm that recovers all the tensors $\{\chi^{(1)}(u)\otimes \dots\otimes \chi(u)^{(\l)}\}_{u\in \cU}$ up to an additive $\epsilon$ error. This algorithm runs in time $n^{O(3^\l)}$ and succeeds with probability $1-\exp(-O(n^{1/3^\l}))$.
	\end{theorem}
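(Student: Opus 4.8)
The plan is to follow the three-step strategy of \citet{BCMV14}: flatten the order-$\l$ tensor down to order $3$, run Jennrich's (simultaneous-diagonalization) algorithm on the order-$3$ instance, and make every numerical step quantitative so that the inverse-polynomial input error translates into only an $\epsilon$ error in the output.

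First I would partition the $\l$ modes into three blocks, two of size $\floor{(\l-1)/2}$ (so that $n^{\floor{(\l-1)/2}}\geq 2\card{\cU}$) and one for the remaining mode(s), and merge the tensor indices inside each block. This rewrites $T$ as an order-$3$ tensor $\tilde T=\sum_{u\in\cU} x(u)\otimes y(u)\otimes z(u)$ on $\R^{N_1}\otimes\R^{N_2}\otimes\R^{N_3}$ with $N_j=n^{\Theta(\l)}$, where $x(u)=\bigotimes_{i\in\text{block }1}\chi(u)^{(i)}$ and similarly for $y,z$, and turns $E$ into an order-$3$ error $\tilde E$. Since each of $x(u),y(u),z(u)$ is itself a rank-one tensor, once we recover them (up to a consistent rescaling) we can read off each $\chi(u)^{(i)}$ by a best rank-one approximation and tensor everything back, recovering $\chi(u)^{(1)}\otimes\dots\otimes\chi(u)^{(\l)}$. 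For exact $\tilde T$, Jennrich's algorithm draws Gaussians $g,g'\in\R^{N_3}$, forms the slices $M=\tilde T(\cdot,\cdot,g)=X D_g Y^\top$ and $M'=\tilde T(\cdot,\cdot,g')=X D_{g'} Y^\top$ with $X=[x(u)]_u$, $Y=[y(u)]_u$, $D_g=\mathrm{diag}(\dotprod{z(u),g})_u$; restricted to the common column span, the eigenvectors of $M(M')^{+}$ are the $x(u)$, with eigenvalues $\dotprod{z(u),g}/\dotprod{z(u),g'}$, after which $y(u)$ and $z(u)$ follow from linear systems.

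Making this robust requires three ingredients, each failing only with probability $\exp(-\Omega(n^{1/3^\l}))$, combined by a union bound: (i) $\sigma_{\min}(X),\sigma_{\min}(Y)\geq(\sigma/\poly(n))^{\Theta(3^\l)}$ --- robust linear independence of Gaussian-perturbed tensor powers, the technical heart, proved in \citet{BCMV14} by a recursion on the tensor order (this is the same kind of statement that \cref{thm:main} establishes here for much more general perturbations, via echelon trees rather than the recursion); (ii) an eigenvalue gap $\min_{u\neq v}\abs{\dotprod{z(u),g}/\dotprod{z(u),g'}-\dotprod{z(v),g}/\dotprod{z(v),g'}}$ and $\min_u\abs{\dotprod{z(u),g'}}$ both at least $(\sigma/\poly(n))^{\Theta(\l)}$, which holds because distinct perturbed $z(u)$ are far from parallel and their $2$-dimensional projections onto $\spn\set{g,g'}$ are anti-concentrated; and (iii) norm bounds $\norm{x(u)},\norm{y(u)},\norm{z(u)},\norm{M},\norm{M'}\leq\poly(n^\l)$. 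Given these, standard matrix perturbation theory --- Weyl's inequality for the singular values used in the pseudoinverse and a Bauer--Fike / $\sin\Theta$-type bound for eigenvectors of a diagonalizable matrix whose eigenvector matrix is well conditioned and whose eigenvalues are separated --- yields $\norm{\hat x(u)-x(u)}\leq\poly(n^\l)\cdot\norm{\tilde E}/(\sigma/\poly(n))^{\Theta(3^\l)}$ and similarly for $\hat y,\hat z$. Since $\norm{\tilde E}\leq\poly(n^\l)\cdot\epsilon(\sigma/n)^{3^\l}$, choosing the implicit constants so that the accumulated $(\sigma/\poly(n))^{-\Theta(3^\l)}$ factor is absorbed gives final error $\epsilon$ per rank-one term; recombining the three blocks by rank-one approximation and tensoring finishes. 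The cost is dominated by one eigendecomposition of an $n^{\Theta(\l)}\times n^{\Theta(\l)}$ matrix once the accuracy is fixed, i.e.\ $n^{O(3^\l)}$.

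The main obstacle is ingredient (i) together with the bookkeeping it forces: the recursive proof of robust linear independence in \citet{BCMV14} loses a multiplicative factor on $\sigma_{\min}$ at each level of flattening, which is exactly what produces the $3^\l$ appearing in the singular-value bound, in the runtime, and in the $\exp(-\Omega(n^{1/3^\l}))$ failure probability, and one must check that the anti-concentration and union-bound estimates survive all these compositions. The perturbation-theory plumbing (Weyl, Bauer--Fike, stability of pseudoinverses) is routine but must be carried out with explicit constants; the genuinely delicate part is the smoothed analysis of robust linear independence --- precisely the step the present paper revisits and sharpens.
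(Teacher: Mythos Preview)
Your proposal is correct and matches the paper's own account of the \citet{BCMV14} argument essentially line for line: flatten the order-$\l$ tensor to order $3$ by grouping modes into blocks of size $\floor{(\l-1)/2}$, $\floor{(\l-1)/2}$, and the remainder; invoke the robust simultaneous-diagonalization / Chang's lemma (\cref{thm:bcmv3}); and verify its three hypotheses (bounded condition number of the first two factor matrices, the third factors far from parallel, and polynomially bounded norms), with the first being the substantive one and the source of the $3^\l$ dependence. The paper does not reprove the theorem in full---it is a cited result---but the sketch it gives is exactly the one you wrote out, and your identification of the recursive smoothed-analysis step as the bottleneck is precisely the point the present paper revisits.
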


	The algorithm behind \cref{thm:bcmv} is based on a robust version of order-3 tensor decomposition, widely known as the ``simultaneous diagonalization'' or Chang's lemma \cite{BCMV14, LRA93, Chang96}. Roughly speaking, the tensor
		\[ T=\sum_{u\in \cU} \chi(u)^{(1)}\otimes\dots\otimes \chi(u)^{(\l)} \]
		can be viewed as an order-3 tensor by grouping some of factors together:
		\begin{equation}
			\label{eq:grouping}
			T=\sum_{u\in \cU} \underbrace{\left(\chi(u)^{(1)}\otimes \dots \otimes \chi(u)^{((\l-1)/2)}\right)}_{\text{factor}}\otimes \underbrace{\left(\chi(u)^{((\l+1)/2)}\otimes\dots\otimes \chi(u)^{(\l-1)} \right)}_{\text{factor}}\otimes \underbrace{\vphantom{\left(\chi(u)^{((\l+1)/2)}\otimes\dots\otimes \chi(u)^{(\l-1)} \right)}\chi(u)^{(\l)}}_{\text{factor}}.
		\end{equation}
	Then the algorithm from \cite{BCMV14} depends on using a robust version of Chang's lemma to decompose $T$. It only needs the collection of first factors to be ``robustly'' linearly independent, the collection of the second factors to be ``robustly'' linearly independent, and the collection of the third factors to ``robustly'' not contain vectors parallel to each other (a weaker notion than linear independence). We give the precise required conditions below:
	\begin{theorem}[\cite{BCMV14}]
		\label{thm:bcmv3}
		Consider the tensor
		\[ T=\sum_{u\in \cU} a(u)\otimes b(u)\otimes c(u) \] and assume that the following conditions are satisfied:
		\begin{enumerate}
			\item \label{cond:1} The condition numbers of the matrices $A,B$ are bounded by $\kappa$, where $A$ is formed by taking $a(u)$s as columns and $B$ by taking $b(u)$s as columns,
			\item \label{cond:2} For any $u_1\neq u_2$, the vectors $c(u_1)$ and $c(u_2)$ are far from being parallel: $\norm{\frac{c(u_1)}{\norm{c(u_1)}}-\frac{c(u_2)}{\norm{c(u_2)}}}\geq \tau$,
			\item \label{cond:3} All of the vectors $a(u),b(u),c(u)$ have norms bounded by $C$, a polynomially bounded quantity.
		\end{enumerate}
		Then there is an efficient algorithm, running in time $\poly(\kappa,1/\tau,n^\l)$, that recovers $a(u)\otimes b(u)\otimes c(u)$ for all $u$ within additive error $\epsilon$, by only observing $T+E$ where $E$ is a noise tensor whose entries are bounded by $\epsilon\cdot \poly(1/\kappa,1/n^\l,\tau)$.
	\end{theorem}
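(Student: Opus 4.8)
The plan is to prove \cref{thm:bcmv3} by making Jennrich's simultaneous-diagonalization algorithm quantitatively robust. Let $A,B$ be the matrices whose columns are the $a(u)$ and the $b(u)$. Contracting the observed tensor in its third mode against a vector $x$ gives a matrix slice
\[ \widetilde T_x := (T+E)(\cdot,\cdot,x) = A\,D_x\,B^\top + E(\cdot,\cdot,x),\qquad D_x := \mathrm{diag}\bigl(\dotprod{c(u),x}\bigr)_{u\in\cU}. \]
First I would draw two independent Gaussian vectors $x,y$ and form $\widetilde T_x,\widetilde T_y$. In the noiseless, full-column-rank case one has $T_xT_y^{+}=A\,D_xD_y^{-1}\,A^{+}$, whose nonzero eigenvalues are $\lambda_u=\dotprod{c(u),x}/\dotprod{c(u),y}$ with eigenvectors the columns $a(u)$ up to scaling; the same computation on the transposed slices $T_x^\top,T_y^\top$ produces the $b(u)$; and since the $\lambda_u$ are common to the two eigendecompositions, matching eigenvalues pairs $a(u)$ with $b(u)$. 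From the rank-one products $a(u)b(u)^\top$ one then recovers each $c(u)$ (up to the reciprocal scaling, which cancels in the rank-one term $a(u)\otimes b(u)\otimes c(u)$ that the theorem asks for) by the least-squares solution of $T=\sum_u\bigl(a(u)\otimes b(u)\bigr)\otimes c(u)$; this system is well conditioned because the Khatri--Rao matrix with columns $a(u)\otimes b(u)$ inherits a positive lower bound on its smallest singular value from $\sigma_{\min}(A),\sigma_{\min}(B)\geq 1/\kappa$ (condition~\ref{cond:1}). The algorithm is thus a constant number of eigen- and least-squares computations on matrices of size $n^{O(\l)}$, meeting the claimed running time.

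The work is in robustifying this, which I would organize into two estimates. \emph{(i) Spectral gap.} For $u_1\neq u_2$ the numerator of $\lambda_{u_1}-\lambda_{u_2}$ is $x^\top\bigl(c(u_1)c(u_2)^\top-c(u_2)c(u_1)^\top\bigr)y$; by condition~\ref{cond:2} the skew-symmetric matrix here has operator norm $\geq\poly(\tau,1/n^\l)$ (using also the norm bounds of condition~\ref{cond:3}), so a standard Gaussian anti-concentration bound plus a union bound over the at most $\binom{\card{\cU}}{2}\leq n^{O(\l)}$ pairs gives, with probability at least $1/2$, both $\min_{u_1\neq u_2}\abs{\lambda_{u_1}-\lambda_{u_2}}\geq\poly(\tau,1/n^\l)$ and $\min_u\abs{\dotprod{c(u),y}}\geq\poly(\tau,1/n^\l)$. \emph{(ii) Robust eigendecomposition.} If $M'$ is within spectral distance $\eta$ of a diagonalizable $M=UDU^{-1}$ with eigenvalue separation $\mathrm{gap}$ and $\mathrm{cond}(U)\leq\kappa'$, then the eigenvectors of $M'$ agree with those of $M$ up to error $\poly(\kappa')\cdot\eta/\mathrm{gap}$, proved by conjugating $M'$ by $U$ and running a Gershgorin/Bauer--Fike argument in the diagonalizing basis. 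Then I would propagate errors: $\norm{E(\cdot,\cdot,x)}\leq\norm{E}\cdot\norm{x}$; since $\sigma_{\min}(B)\geq 1/\kappa$ and $\norm{D_y}$ is $\poly$-bounded, $\widetilde T_y^{+}$ is within $\poly(\kappa,n^\l)$ of $T_y^{+}$; hence $\widetilde T_x\widetilde T_y^{+}$ is within $\poly(\kappa,n^\l)\cdot\norm{E}$ of $M=A\,D_xD_y^{-1}\,A^{+}$; applying (ii) with the gap from (i) and eigenvector conditioning $\poly(\kappa,n^\l)$ (adjoin an orthonormal basis of the cokernel of $A$) recovers the $a(u)$, and symmetrically the $b(u)$, up to error $\poly(\kappa,1/\tau,n^\l)\cdot\norm{E}$; the eigenvalue matching still succeeds once this is below half the gap; and the final least-squares solve for the $c(u)$ costs one more $\poly(\kappa,n^\l)$ factor. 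Choosing the entrywise bound on $E$ to be $\epsilon\cdot\poly(1/\kappa,1/n^\l,\tau)$ with a large enough polynomial keeps the accumulated error below $\epsilon$.

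The main obstacle is estimate (ii): non-symmetric matrices lack the clean Weyl/Davis--Kahan perturbation bounds of the Hermitian case, so eigenvector stability must be routed through the conditioning of the eigenvector basis --- which is exactly why the hypothesis bounds $\kappa$ --- and one must verify that this detour, the several pseudo-inverses, and the eigenvalue-matching step each degrade only \emph{polynomially}, not exponentially, in $\kappa$, $1/\tau$, and $n^\l$; keeping the anti-concentration estimate of (i) uniform over all $n^{O(\l)}$ pairs at constant overall failure probability is a secondary bookkeeping burden. All of this is carried out in detail in \citet{BCMV14}, on which we rely.
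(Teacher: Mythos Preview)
Your sketch is correct and follows the same simultaneous-diagonalization (Jennrich/Chang) approach as \citet{BCMV14}; the paper itself does not prove this theorem but simply cites it, so your write-up and the paper are in agreement in deferring the details to \citet{BCMV14}.
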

	Condition~\ref{cond:1} is arguably the most difficult one to satisfy. Condition~\ref{cond:2} is satisfied with high probability for many distributions of interest $\cD$, but it can also be automatically reduced to condition~\ref{cond:1} if one is willing to change the grouping in \cref{eq:grouping}. If instead of having three groups, the first two composed of $(\l-1)/2$ factors and the last one composed of one factor, we create three equal-sized groups (each consisting of $\l/3$ factors), then the last group would also have a bounded condition number (by an extension of condition~\ref{cond:1}) and will automatically satisfy condition~\ref{cond:2}. This makes the dependency on $\l$ worse but would still give us something similar to \cref{thm:bcmv} with $\card{\cU}\leq n^{\floor{\frac{\l-1}{2}}}/2$ replaced by $\card{\cU}\leq n^{\floor{\frac{\l}{3}}}/2$. Finally, note that condition~\ref{cond:3} is also automatically satisfied with very high probability for Gaussian perturbations and also our model, in which we sample vectors from the hypercube $\{0,1\}^n$. We assume that $\cD$ is not only \nondet{\delta, p} but also that it satisfies condition~\ref{cond:3} with high probability.
	
	In \cref{sec:tensor} we focus only on proving condition~\ref{cond:1} in \cref{thm:bcmv3}. To make the notation simpler we replace $\l$ by $(\l-1)/2$, and assume $a(u)$s are tensors of $\l$ factors. In order to bound the condition number of $A$ in \cref{thm:bcmv3}, we need to lowerbound the minimum singular value and upperbound the maximum singular value of $A$. An upperbound on $\sigma_{\max}(A)$ is readily given by condition~\ref{cond:3} of \cref{thm:bcmv3}. The matrix $A$ has columns with norms bounded by $C$ and therefore
	\[ \sigma_{\max}(A)\leq n^{\l/2} C. \]
\end{document}